% Generated by GrindEQ Word-to-LaTeX 2010
% ========== UNREGISTERED! ========== Please register! ==========
% LaTeX/AMS-LaTeX

%\documentclass[12pt]{article}
\documentclass[10pt]{article}
%%% remove comment delimiter ('%') and specify encoding parameter if required,
%%% see TeX documentation for additional info (cp1252-Western,cp1251-Cyrillic)
%\usepackage[cp1252]{inputenc}

%%% remove comment delimiter ('%') and select language if required
%\usepackage[english,spanish]{babel}

%\usepackage{amssymb}
%\usepackage{amsmath}
%\usepackage{amssymb,amsmath, amsthm, setspace, color, graphics, fullpage}
%\usepackage{amssymb,amsmath, amsthm, setspace, color, graphics, appendix} %for use when appendix is contiguous
\usepackage{amssymb,amsmath, amsthm, setspace, color, graphics}
\usepackage{dsfont}              % ds="double stroke" for \mathds
\usepackage[dvips]{graphicx}
\usepackage[english]{babel}
\usepackage{float}
\usepackage{caption}
\usepackage{subcaption}
\usepackage[authoryear]{natbib}
\usepackage[pdfborder={0 0 0},final=true,colorlinks=true,linkcolor=magenta,citecolor=blue]{hyperref}
%\usepackage[style=chicago]{biblatex}
%%% remove comment delimiter ('%') and specify parameters if required
%\usepackage[dvips]{graphics}
\setlength{\parindent}{5mm}       % set paragraph indent
\numberwithin{equation}{section}
              % Set tabs
% ******************** THEOREMS ********************
\theoremstyle{plain}
\newtheorem{thm}{Theorem}[section]

\newtheorem{lem}[thm]{Lemma}
\newtheorem{prop}[thm]{Proposition}
\theoremstyle{plain}
\newtheorem{assumption}[thm]{Assumption}      % The next few lines control assumption formats
\theoremstyle{definition}

\theoremstyle{definition}
\newtheorem{defn}{Definition}[section]
\theoremstyle{remark}
\newtheorem{rem}{Remark}[section]
\begin{document}
\title{\textbf{Representation Theory for Risk On Markowitz-Tversky-Kahneman Topology}\footnote{I thank Mario Ghossub for bringing my attention to related work on preference based loss aversion estimation. Research support of the Institute for Innovation and Technology Management is gratefully acknowledged.}}
\author{Godfrey Cado\u{g}an
    \thanks{Corresponding address: Institute for Innovation and Technology Management, Ted Rogers School of Management, Ryerson University, 575 Bay, Toronto, ONM5G 2C5; e-mail: \href{godfrey.cadogan@ryerson.ca}{godfrey.cadogan@ryerson.ca}; Tel: (786) 329-5469.}\\Comments welome
    }
\date{\today\vspace{-4ex}}
\maketitle
\begin{abstract}
   \noindent We introduce a representation theory for risk operations on locally compact groups in a partition of unity on a topological manifold for Markowitz-Tversky-Kahneman (MTK) reference points. We identify (1) \emph{risk torsion} induced by the flip rate for risk averse and risk seeking behaviour, and (2) a structure constant or coupling of that torsion in the paracompact manifold. The risk torsion operator extends by continuity to prudence and maxmin expected utility (MEU) operators, as well as other behavioural operators introduced by the Italian school. In our erstwhile chaotic dynamical system, induced by behavioural rotations of probability domains, the loss aversion index is an unobserved gauge transformation; and reference points are hyperbolic on the utility hypersurface  characterized by the special unitary group $SU(n)$. We identify conditions for existence of harmonic utility functions on paracompact MTK manifolds induced by transformation groups. And we use those mathematical objects to estimate: (1) loss aversion index from infinitesimal tangent vectors; and (2) value function from a classic Dirichlet problem for first exit time of Brownian motion from regular points on the boundary of MTK base topology.
\\
\\
\noindent\emph{Keywords:} representation theory, topological groups, utility hypersurface, risk torsion, chaos, loss aversion
\\ \\
\noindent\emph{JEL Classification Codes:} C62, C65, D81\\
\noindent\emph{2000 Mathematics Subject Classification:} 54H15, 37CXX
\end{abstract}
\newpage
\tableofcontents
\listoffigures
%\doublespace
%\large
\newpage
\section{Introduction}\label{sec:Introduction}
We fill a gap in the literature on decision theory by introducing a representation theory for the Lie algebra of decision making under risk and uncertainty on locally compact groups in a topological manifold $M$. This approach is motivated by \cite[Fig.~5,~pg.~154]{Markowitz1952} who, in extending \cite{FriedmanSavage1948} utility theory, stated ``[g]enerally people avoid symmetric bets. This implies that the curve falls faster to the left of the origin than it rises to the right of the origin". In fact, \cite[pg.~155]{Markowitz1952} plainly states: ``the utility function has three inflection points. The middle inflection point is defined to be the ``customary" level of wealth. $\dotsc$ The curve is monotonially increasing but bounded; it is first concave, then convex, then concave, and finally convex". Thus, he  posited a utility function $u$ of wealth $x$ around the origin such that $u(x) > |u(-x)|$ and ``$x=0$ is customary wealth", \emph{id.,} at 155--a \emph{de facto} reference point for gains or losses in wealth. Each of the subject inflection points are critical points for risk dynamics. \cite[pg.~277]{KahnTver1979} also introduced a \emph{reference point} hypothesis. Theirs is based on ``perception and judgment", and they ``hypothesize that the value function $[v]$ for changes of wealth $[x]$ is normally concave above the reference point $(v^{\prime\prime}(x)<0,\ $ for $x >0)$ and often convex below it $(v^{\prime\prime}(x) > 0,\ x<0)$" [emphasis added], \emph{id.,} at 278. See also, \cite[pg.~303]{TverKahn1992}.

The aforementioned seminal papers support examination of risk dynamics for transformation groups in a neighbourhood of the origin [or critical points] which, by definition, are included in a topological manifold. For example, the basis sets for \cite{Markowitz1952} topology are $$U^M_\alpha = \{x|\; u(x)>|u(-x)|,\;x>0,\;-x<0<x\},\;\;\alpha\in A$$ while that for \cite{KahnTver1979, TverKahn1992} are given by $$U^{TK}_\alpha = \{x|\; u^{\prime\prime}(x)<0, x>0;\;u^{\prime\prime}(x)>0, x<0;\;-x<0<x \}\;\;\alpha\in A$$ A refined topology has basis set $U^{MTK}_\alpha=U^M_\alpha\cap U^{TK}_\alpha$. So $M\subseteq \bigcup_\alpha U^{MTK}_\alpha$ for index $\alpha\in A$. We prove that the Gauss curvature $K(\mathbf{x}_0)$ associated to a reference point $\mathbf{x}_0$ on the topological manifold of a utility hypersurface is hyperbolic, i.e. consistent with Friedman-Savage-Markowitz utility, and typically characterized by the quantum group $SU(n)$. Moreover, we introduce the concept of risk torsion and a corresponding gauge transformation for risk torsion. And extend it to the literature on \emph{prudence} spawned by \cite{Sandmo1970}. The latter typically involves \emph{precautionary savings} as a buffer against uncertain future income streams. These theoretical results provide a microfoundational bottom-up approach to results reported under rubric of decision field theory and quantum decision theory. See e.g. \cite{BusemeyerDiederich2002}; \cite{LambertMogiliansky2009}; \cite{BusemeyerPothosFrancoTrueblood2011}; \cite{YukalovSornette2010}; \cite{YukalovSornette2011}.

Other independently important results derived from our approach are value function and loss aversion index estimates. The latter being a solution to a gauge transformation for transformation groups in a Hardy space. That result is consistent with \cite[pg.~127]{KobberlingWakker2005} who argued that loss aversion is a psychological risk attribute unrelated to probability weighting and curvature of value functions in loss gain domains. Among other things, a recent paper by \cite[pg.~5]{Ghossub2012} proposed a preference based estimation procedure for loss aversion, motivated by a probability weighting operator introduced in \cite[pg.~281]{BernardGhossub2010}, and extended it to objects other than lotteries. Our estimate for loss aversion index differs from those papers because it is based on the distribution of elements of the infinitesimal tangent vector in a Lie group germ. Thus, eliminating some of the differentiability problems at the kink in \cite{KobberlingWakker2005}, and providing a preference indued loss aversion estimator. Further, we identify harmonic utility in Hardy spaces, and exploit the mean value property induced by the first exit times of Brownian motion through regular points on the boundary of a domain in MTK basis topology. Those results are summarized in Proposition \autoref{prop:ValueFunctionEstimate}.

Intuitively, our theory is based on the fact that analysis on a local utility surface extends globally if the topological manifold for that surface is paracompact. In Proposition \autoref{prop:PartionPWF} we proffer a partition of unity of probability weighting functions where each partition has a local coordinate system. The second axiom of countability and paracompactness criterion allows us to extend the analysis globally. See e.g., \cite[pp.~8-10]{Warner1983}. Furthermore, Lie group theory is based on infinitesimal generators on a topological manifold, and Lie algebras extend to linear algebra. See \cite[pg.~5]{Nathanson1979}. So our results have practical importance for analysis of behavioural data. The main results of the paper are summarized in Lemmas \autoref{lem:StructureConstant} (risk coupling), \autoref{lem:PrudenceRiskTorsion} (risk torsion), and \autoref{lem:HarmonicUtility} (harmonic utility). The rest of the paper proceeds as follows. In \autoref{subsec:BehaveOperatorProbDomain} we describe the Euclidean motions induced by risk operations. In \autoref{subsec:LieAlgebraOfRisk} we introduce the concept of risk torsion, and characterize the representation of the Lie algebra of risk. We conclude in \autoref{sec:Conclusion} with perspectives on avenues for further research.
\section{The Model}\label{sec:Model}
In this section we provide preliminaries on definitions and other pedantic used to develop the model in the sequel.
\subsection{Preliminaries}\label{subsec:Preliminaries}
\begin{defn}[Group]\cite[pp.~17-18]{Clark1971}\label{defn:Group}~\\
    A group $G$ is a set with an operation or mapping $\mu:G\times G\to G$ called a \emph{group product} which associates each ordered pair $(a,b)\in G\times G$ with an element $ab\in G$ in such a way that:
    \begin{itemize}
       \item[(1)] for any elements $a,\;b,\;c\in G$, we have $(ab)c=a(bc)$
       \item[(2)] there is a unique element $e\in G$ such that $ea=a=ae$ for any $a\in G$
       \item[(3)] for each $a\in G$ there exist $a^{-1}$, called the \emph{inverse}, such that $a^{-1}a=e=aa^{-1}$
    \end{itemize}
\end{defn}
\begin{rem}
   When $(3)$ is omitted from the definition we have a \emph{semi-group}. \hfill $\Box$
\end{rem}
\begin{figure}
   \centering
%%----start  of  first  figure----
      \begin{minipage}[h]{0.4\linewidth}
         %\centering
         %\includegraphics[width=1in]{graphic}
         \captionof{figure}{Markowitz-Tversky-Kahneman reference point nbd}
         \label{fig:TKValueFunc}
         \centerline{\includegraphics[scale=.8]{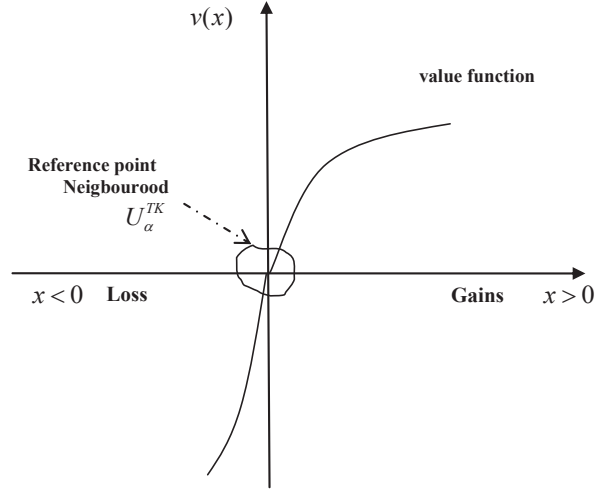}}
         %\caption{Small  Box}  \label{fig:side:a}
      \end{minipage}
   \hspace{1cm}
%%----start  of  second  figure----
%   \begin{minipage}[h]{0.4\linewidth}
      %\centering
      %\includegraphics[width=1.5in]{graphic}
%      \captionof{figure}{Confidence Trajectory From Gain Over Loss Domain}
%      \label{fig:ConfiTrajectCG2Loss}
%      \centerline{\includegraphics[scale=.6]{ConfiTrajectCG2Loss}}
      %\caption{Big  Box}    \label{fig:side:b}
%   \end{minipage}
\end{figure}
\begin{defn}[Markowitz-Tversky-Kahneman reference point nbd topology]\label{defn:MarkowitzTverskyKahnemanTopol}
   Let $u\in C^2_0(X)$ be a real valued utility function. The reference point basis topology induced by \cite{Markowitz1952} (M) and \cite{KahnTver1979,TverKahn1992} (TK) is given by:
   \begin{description}
     \item[M]\qquad $U^M = \{x|\; u(x)>|u(-x)|,\;-x<0<x\}$
     \item[TK] \qquad $U^{TK} = \{x|\; u^{\prime\prime}(x)<0, x>0;\;u^{\prime\prime}(x)>0, x<0;\;-x<0<x \}$
     \item[MTK] \qquad $U^{MTK} = U^M\cap U^{TK}$
   \end{description}
   \hfill $\Box$
\end{defn}
\noindent A Markowitz-Tversky-Kahneman reference point neihbourood for a typical value function $v(x)$ is depicted in \autoref{fig:TKValueFunc} on page \pageref{fig:TKValueFunc}.
\begin{defn}[Compact set] See \cite[pg.~222]{Dugundji1966}\label{defn:Compact}
    A set is compact if every covering has a countable sub-cover. \hfill $\Box$
\end{defn}
\begin{defn}[Paracompact spaces]\cite[pg.~162]{Dugundji1966}\label{defn:ParacompactSpace}
   A Hausdorf space $Y$ is paracompact of each open covering of $Y$ has an open neighbourhood-finite refinement. \hfill $\Box$
\end{defn}
\begin{defn}[Topological Manifold]\cite[pg.~1]{Michor1997}\label{defn:Manifold}~\\
   A \emph{topological manifold} is a separable metrizable space $M$ which is locally homeomorphic to $\mathbf{R}^n$. So for any open neighbourhood $U$ of a point $x\in M$ there is a homeomorphism $g:U\to g(U)\subseteq \mathbf{R}^n$. The pair $(U,g)$ is called a chart on $M$. A family of charts $(U_\alpha,g_\alpha)$ such that $\cup_\alpha U_\alpha$ is a cover of $M$ is called an atlas. \hfill $\Box$
\end{defn}
\begin{rem}
   \cite[pg.~68]{Chevalley1946} provides a useful but more lengthy axiomatic definition of a manifold. \hfill $\Box$
\end{rem}
\noindent For example, $(U^M_\alpha,u_\alpha)$ and $(U^{TK}_\alpha,u_\alpha)$ are charts on some choice space manifold $M$. Whereas $\cup_\alpha U^M_\alpha$ and $\cup U^{TK}_\alpha$ are covers of $M$.
\begin{defn}[Partition of unity]\cite[pg.~8]{Warner1983}\label{defn:PartitionUnity}
   A partition of unity on $M$ is a collection $\{w_i|\;i\in I\}$ of $C^\infty$ weighting functions on $M$ such that
   \begin{description}
     \item[(a)] The collection of supports $\{\text{supp}\ w_i; i\in I\}$ is locally finite.
     \item[(b)] $\sum_{i\in I}w_i(p)=1$ for all $p\in M$, and $w_i(p)\geq 0$ for all $p\in M$ and $i\in I$.
   \end{description}
   \hfill $\Box$
\end{defn}
\begin{thm}[Existence of partition of unity on manifolds]\cite[pg.~10]{Warner1983}\label{thm:PartitionUnity}
   Let $M$ be a differentiable manifold and $\{V_\alpha,\ \alpha\in A\}$ be an open cover of $M$. Then there exists a countable partition of unity $\{w_i;\ i=1,2,\dotsc\}$, subordinate to the cover $V_\alpha$, i.e. $\text{supp}\ w_i\subset V_\alpha$, and supp$\ w_i$ compact. \hfill $\Box$
\end{thm}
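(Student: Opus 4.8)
The plan is to build the partition of unity from smooth bump functions localized by a compact exhaustion of $M$. First I would invoke the fact that a differentiable manifold in the sense of \autoref{defn:Manifold} is separable metrizable and locally homeomorphic to $\mathbf{R}^n$, hence locally compact, Hausdorff, and second countable. This guarantees an exhaustion by open sets with compact closure: a sequence $G_1, G_2, \dots$ with each $\overline{G_i}$ compact, $\overline{G_i} \subset G_{i+1}$, and $\bigcup_i G_i = M$. The organizing device is the family of ``annuli'' $A_i := \overline{G_i} \setminus G_{i-1}$, which are compact, together with the enlarged open sets $G_{i+1} \setminus \overline{G_{i-2}}$, which cover the $A_i$ but overlap only with neighbouring indices.

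Next I would construct the bump functions. For each point $p \in A_i$, I would choose a chart $(U,g)$ with $U$ contained both in some cover element $V_\alpha$ and in the open set $G_{i+1} \setminus \overline{G_{i-2}}$. Pulling back through $g$ a standard smooth plateau function on $\mathbf{R}^n$ — one equal to $1$ on a small ball about $g(p)$ and vanishing outside a slightly larger ball whose closure lies in $g(U)$ — yields a nonnegative $C^\infty$ function $\psi$ on $M$ with $\text{supp}\,\psi$ compact, $\text{supp}\,\psi \subset V_\alpha$, and $\psi$ positive near $p$. The existence of such a bump function on $\mathbf{R}^n$ is the single genuinely analytic ingredient; it descends from the classical non-analytic $C^\infty$ function $t \mapsto e^{-1/t}$ for $t > 0$ (and $0$ otherwise), from which a plateau is assembled.

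Then I would extract a locally finite subfamily. Since each annulus $A_i$ is compact, finitely many of the positive sets $\{\psi > 0\}$ cover it; collecting these finitely many $\psi$'s across all $i$ produces a countable family $\{\psi_j\}$ whose positive sets cover $M$. Local finiteness is where the annular bookkeeping pays off: any point lies in some $G_k$, and only the $\psi$'s arising from the indices $A_{k-1}, A_k, A_{k+1}$ can be nonzero on a neighbourhood of it, so every point has a neighbourhood meeting only finitely many supports.

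Finally I would normalize. The sum $w := \sum_j \psi_j$ is a well-defined $C^\infty$ function by local finiteness, and $w(p) > 0$ for every $p \in M$ because the positive sets cover $M$. Setting $w_j := \psi_j / w$ delivers the required partition: each $w_j$ is $C^\infty$ and nonnegative, $\sum_j w_j \equiv 1$, while $\text{supp}\,w_j = \text{supp}\,\psi_j$ is compact and contained in some $V_\alpha$, with the supports still locally finite. I expect the main obstacle to be exactly the simultaneous control of local finiteness and subordination — ensuring the bump supports overlap sparsely enough that $w$ stays smooth and positive while each support remains inside a single cover element; the compact exhaustion together with the annular decomposition is precisely the construction that reconciles the two demands.
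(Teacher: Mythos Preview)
The paper does not supply its own proof of this theorem; it is quoted from \cite[pg.~10]{Warner1983} and closed with a $\Box$ immediately after the statement. Your argument is the standard one found in that reference---compact exhaustion, locally finite bump functions subordinate to the cover via annular bookkeeping, then normalization---so there is nothing to compare against in the paper itself, and your proposal is correct.
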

\begin{rem}
   We state here that part of the theorem that pertains to paracompactness of $M$. However, it can be extended to non-compact support for $w_i$. \hfill $\Box$
\end{rem}
\noindent Theorem \autoref{thm:PartitionUnity} basically allows us extend the analysis in a reference point neighbourhood to global probability weighting functions and value function analysis. We state this formally with the following:
\begin{prop}[Partition of probability weighting functions]\label{prop:PartionPWF}
   Let $\mathbf{x}_0$ be a reference point for a real valued value function $v$ and $U_\alpha(\mathbf{x}_0)$ be a neighbourhood (nbd) of $\mathbf{x}_0$. So that $v:U_\alpha(\mathbf{x}_0)\to \mathbf{R}$. Let $\mathbf{p}_0$ be the corresponding probability attached to the reference point. Let $V_{\alpha}(\mathbf{p}_0)$ be a nbd of $\mathbf{p}_0$ for some $\alpha$. Then there exist some $C^\infty$ local probability weighting function $w_i$ with compact support, such that supp\ $w_i\subset V_\alpha$ and $0<w_i(\text{supp}\ w_i)<1$. So that $\mathbf{p}_0 \in M\Rightarrow \sum_\alpha w_\alpha(\mathbf{p}_0)=1$. \hfill $\Box$
\end{prop}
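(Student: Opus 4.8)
The plan is to recognize this proposition as a direct application of the existence theorem for partitions of unity (\autoref{thm:PartitionUnity}) to the probability domain sitting over the MTK reference-point topology. First I would verify that the hypotheses of that theorem are in place. By \autoref{defn:Manifold} the underlying manifold $M$ is separable, metrizable, and locally homeomorphic to $\mathbf{R}^n$; since a metrizable space satisfying the second axiom of countability is Hausdorff and paracompact (see \cite[pp.~8--10]{Warner1983} together with \autoref{defn:ParacompactSpace}), $M$ carries the differentiable structure to which \autoref{thm:PartitionUnity} applies. The reference point $\mathbf{x}_0$ with its MTK chart $(U_\alpha(\mathbf{x}_0),u_\alpha)$ induces, through the value function $v$ and the probability $\mathbf{p}_0$ attached to $\mathbf{x}_0$, a corresponding chart $V_\alpha(\mathbf{p}_0)$; as $\mathbf{x}_0$ ranges over the reference points, the family $\{V_\alpha(\mathbf{p}_0)\}_{\alpha\in A}$ furnishes an open cover of $M$.

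Second, I would invoke \autoref{thm:PartitionUnity} directly on this cover. It yields a countable family $\{w_i;\ i=1,2,\dotsc\}$ of $C^\infty$ functions subordinate to $\{V_\alpha\}$, so that each $w_i$ has compact support with $\text{supp}\ w_i\subset V_\alpha$. By clause (b) of \autoref{defn:PartitionUnity} we have $w_i(p)\geq 0$ for every $p\in M$ and $\sum_{i}w_i(p)=1$. Interpreting the $w_i$ as the local probability weighting functions then delivers the asserted normalization: for any probability $\mathbf{p}_0\in M$,
\begin{equation}
\sum_\alpha w_\alpha(\mathbf{p}_0)=1.
\end{equation}

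Third, to obtain the two-sided bound $0<w_i(\text{supp}\ w_i)<1$, I would appeal to the normalized-bump construction underlying \autoref{thm:PartitionUnity}. Each $w_i$ arises as a quotient $w_i=\varphi_i/\sum_j\varphi_j$ of nonnegative bump functions $\varphi_i$; on the interior of $\text{supp}\ w_i$ the numerator is strictly positive, forcing $w_i>0$, while local finiteness of the cover together with the overlap of neighbouring charts guarantees that at such points at least one other $\varphi_j$ is also positive, whence $w_i<1$.

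The main obstacle I anticipate is not the invocation of \autoref{thm:PartitionUnity} itself, which is essentially immediate once the hypotheses are secured, but rather the two bookkeeping steps that precede it: making precise the correspondence $\mathbf{x}_0\mapsto\mathbf{p}_0$ and the induced passage from the wealth charts $(U_\alpha(\mathbf{x}_0),u_\alpha)$ to the probability charts $V_\alpha(\mathbf{p}_0)$, so that the $V_\alpha$ genuinely cover $M$; and establishing the differentiable, hence paracompact and second-countable, structure on $M$ needed to run the theorem. The strict upper bound $w_i<1$ likewise hinges on the cover being a genuine refinement with overlapping elements, a point that must be checked rather than assumed.
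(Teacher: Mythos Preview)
Your approach matches the paper's: the proposition is stated immediately after the sentence ``Theorem \autoref{thm:PartitionUnity} basically allows us [to] extend the analysis \ldots,'' and the paper offers no separate proof, treating it as a direct consequence of \autoref{thm:PartitionUnity} with the subsequent paragraph supplying only the implementation via the attaching map of Definition~\autoref{defn:TopologyAttachedEventSpace}. If anything, you have supplied more justification than the paper does, particularly for the strict bound $0<w_i<1$, which the paper simply asserts.
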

\noindent To implement Proposition \autoref{prop:PartionPWF}, we summarize the \cite[pg.~300]{TverKahn1992} topology. Let $X$ be an outcome space that includes a neutral outcome or reference point which we assign $0$. So that all other elements of $X$ are gains or losses relative to that point. An uncertain prospect is a mapping $f:\Omega\to X$ were $\Omega$ is a sample space or finite set of states of nature. Thus, $f(\omega)\in X$ is a stochastic choice. Rank $X$ in monotonic increasing order. So that a prospect $f$ is a sequence of pairs $(x_\alpha,A_\alpha)$ where $\{A_\alpha\}_{\alpha\in\mathcal{I}}$ is a discrete partition of $\Omega$ indexed by $\mathcal{I}$. In other words, the prospect $f$ is a rank ordered \emph{configuration}, i.e. sample function of a \emph{random field}, of outcomes in $X$. Let $U^{MTK}_\alpha = U^M_\alpha\cap U^{TK}_\alpha$ be a refinement of the neighbourhood topology in Definition \autoref{defn:MarkowitzTverskyKahnemanTopol}. Next, we introduce the notion of attached spaces, and proceed to apply it to the implementation at hand.
\begin{defn}[Attaching weighted probability space to outcome space]\cite[pg.~127]{Dugundji1966}.\label{defn:TopologyAttachedEventSpace}
   Let $(\Omega,\mathcal{F},P)$ be a classic probability space with sample space $\Omega$, $\sigma$-field of Borel measurable subsets of $\Omega$ given by $\mathcal{F}$, and probability measure $P$ on $\Omega$. For a sample element $\omega\in\Omega$, define $f(\omega)=x\in U^{MTK}_\alpha$ where $U^{MTK}_\alpha$ is a neighbourhood base in consequence or outcome space, and $f(\omega)$ is an act, i.e., stochastic choice. Let $U^{PWF}_\alpha$ be a $\mathcal{F}$ measurable neighbourhood base such that $P:U^{PWF}_\alpha\to U^{MTK}_\alpha$, where $P$ is a probability distribution that corresponds to $x$. Thus, $U^{PWF}_\alpha$ and $U^{MTK}_\alpha$ are two disjoint abstract spaces. Let $U^{PWF}_\alpha+ U^{MTK}_\alpha$ be the \emph{free union} of $U^{PWF}_\alpha$ and $U^{MTK}_\alpha$. Define an equivalence relation $\mathcal{R}$ by $\omega\sim (f\circ w\circ P)(\omega)$, where $w$ is a probability weighting function. The \emph{quotient space} $(U^{PWF}_\alpha + U^{MTK}_\alpha)\backslash \mathcal{R}$ is said to be $U^{PWF}_\alpha$ attached to $U^{MTK}_\alpha$ by the composite function $f\circ w\circ P$ which is written $U^{PWF}_\alpha +_{f\circ w\circ P}U^{MTK}_\alpha$. The composite function $f\circ w\circ P$ is called the attaching map. \hfill $\Box$
\end{defn}
\begin{rem}
   The interested reader is referred to \cite[\S 9]{Willard1970} for a taxonomy of examples of construction of new spaces from old in the context of quotient topology. \hfill $\Box$
\end{rem}
Let $P:A_\alpha\to U^{MTK}_\alpha(\pmb{0})$ be a mapping into a reference point neighbourhood, and $w$ be a weighting function such that $w\circ P(A_\alpha)\subseteq w(U^{MTK}_\alpha(\pmb{0}))\subseteq U^{PWF}_\alpha$, where $U^{PWF}_\alpha$ is an induced neighbourhood base cover for probability weighting assigned to uncertain events $A_\alpha$. Such mappings are permitted due to the smallness of the neighbourhoods being considered. From the outset we note that $w\in C^\infty[0,1]$ according to \cite{Prelec1998, Luce2001}.  In that way $\{U^{PWF}_\alpha\}_{\alpha\in\mathcal{I}}$ is a covering of the probabilistic manifold, i.e. we assign $w_\alpha(\mathbf{p}_0)=w(U^{PWF}_\alpha)$ so that $supp\; w_\alpha = A_\alpha$ and  $\mathbf{p}_0 \in M\Rightarrow \sum_\alpha w_\alpha(\mathbf{p}_0)=1$. For example, $M\subseteq \mathbf{R}^n\Rightarrow \mathbf{p}_0=(p^1_0,p^2_0,\dots,p^n_0)$. In other words, by Definition \autoref{defn:TopologyAttachedEventSpace}, $U^{PWF}_\alpha$ is attached to $U^{MTK}_\alpha$ by $P$ and the attached space $\large\{U^{PWF}_\alpha +_{f\circ w\circ P}U^{MTK}_\alpha\large\}_{\alpha\in I}$ is a covering of the prospect $f=(x_\alpha,A_\alpha),\;\alpha\in I$.
\begin{defn}[Lie product]\cite[pg.~105]{Guggenheim1977}\label{defn:LieProduct}~\\
   The Lie product $[\pmb{\alpha}\ \pmb{\beta}]$ of two infinitesimal vectors $\pmb{\alpha}$ and $\pmb{\beta}$ belonging to curves $\mathbf{x}(t)$ and $\mathbf{y}(t)$, respectively, is the infinitesimal vector of $(\mathbf{ab}-\mathbf{ba})(t^2)$. The substraction is understood to bee in thee sense of vector addition in $\mathbf{R}^n$. \hfill $\Box$
\end{defn}
\begin{defn}[Lie algebra]\cite[pg.~106]{Guggenheim1977}\label{defn:LieAlgebra}~\\
   The Lie algebra $\mathcal{L}(G)$ of a Lie group germ $G$ is the algebra of infinitesimal vectors defined by the Lie product. \hfill $\Box$
\end{defn}
\begin{defn}[Lie group]\cite[pg.~103]{Guggenheim1977}\label{defn:LieGroup}~\\
   A Lie group is a group which is also a differentiable manifold. A Lie group germ is a neighbourhood of the unit element $\mathbf{e}$ of a Lie group. Thus, it is possible to construct a compact Lie group from coverings of Lie group germs. Let G be a Lie group germ in a neighbourhood V of the origin \textbf{e} in $\mathbf{R}^{n}$ such that the pair of vectors is mapped (\textbf{x},\textbf{y}) $\mapsto $ \textbf{f}(\textbf{x},\textbf{y}) $\in $ $\mathbf{R}^{n}$ subject to the following axioms.
   \begin{itemize}
      \item[ (L1)] \textbf{f}(\textbf{x},\textbf{y}) is defined for all \textbf{x} $\in $ V, \textbf{y} $\in $ V

      \item [(L2)] \textbf{f}(\textbf{x},\textbf{y}) $\in $ C${}^{2}$($\mathbf{R}^n$)

      \item [(L3)] If \textbf{f}(\textbf{x},\textbf{y}) $\in $ V and \textbf{f}(\textbf{y},\textbf{z}) $\in $ V, then \textbf{f}(\textbf{f}(\textbf{x},\textbf{y}),\textbf{z}) = \textbf{f}(\textbf{x}, \textbf{f}(\textbf{y},\textbf{z}))

      \item [(L4)] \textbf{f}(\textbf{e},\textbf{y}) = \textbf{y} and \textbf{f}(\textbf{x},\textbf{e}) = \textbf{x}
\end{itemize}
   \noindent The Lie algebra $\mathcal{L}$(G) on this transformation group is given by [\textbf{a},\textbf{b}] such that
   \[\left[\alpha {\mathbf a}\ +\ \beta {\mathbf b}{\mathbf ,\ }{\mathbf c}\right]=\alpha \left[{\mathbf a},\;{\mathbf c}\right]{\rm +}\beta {\rm [}{\mathbf b},\;{\mathbf c}{\rm ]}\ \]
   \[\left[{\mathbf a}{\rm ,}\;\alpha {\mathbf b}{\rm +}\beta {\mathbf c}\right]{\rm =}\alpha \left[{\mathbf a},\;{\mathbf c}\right]{\rm +}\beta {\rm [}{\mathbf b},\;{\mathbf c}{\rm ]}\] \hfill $\Box$
\end{defn}
In the sequel, we assume that the neighbourhood $V$ which contains the Lie group germ $G$ in Definition \autoref{defn:LieGroup} is given by $V=\inf_\alpha\{U^M_\alpha\cap U^{TK}_\alpha\}$ for the topological basis in Definition \autoref{defn:MarkowitzTverskyKahnemanTopol}.
\subsection{Rotation of behavioral operator over probability domains.}\label{subsec:BehaveOperatorProbDomain}
Let $p^*$ be a fixed point probability that separates loss and gain domains. See \cite{KahnTver1979} and \cite{TverKahn1992}. Let $\mathcal{P}_{\ell } \triangleq [0,p^{*} ]$ and $\mathcal{P}_{g} \triangleq (p^{*} ,1]$ be loss and gain probability domains as indicated.  So that the entire domain is $\mathcal{P}=\mathcal{P}_\ell\cup\mathcal{P}_g$. Let $w(p)$ be a probability weighting function (PWF), and $p$ be an equivalent martingale measure.
\begin{defn}[Behavioural matrix operator]\label{defn:BehaveMatrixOperr}~\\
   The confidence index from loss to gain domain is a real valued mapping defined by the kernel function
   \begin{align}
      K:\mathcal{P}_\ell & \times\mathcal{P}_g \rightarrow [-1,1]\\
      K(p_{\ell } ,p_{g} ) &=\int _{p_{\ell } }^{p_{g} }[w (p)-p]dp=\int _{p_{\ell } }^{p_{g} }w (p)dp-\frac{1}{2} (p_{g}^{2} -p_{\ell }^{2} ),\;\;(p_\ell,p_g)\in\mathcal{P}_\ell\times\mathcal{P}_g \label{eq:BehavEconfIndex}\\
      \intertext{We note that that kernel can be transformed even further so that it is singular at the fixed point $p^*$ as follows:}
     \hat{K}(p_\ell,p_g) &= \frac{K(p_\ell,p_g)}{p_g-p_\ell}=\frac{1}{p_g-p_\ell}\int _{p_{\ell } }^{p_{g} }w (p)dp-\frac{1}{2} (p_{g} + p_{\ell } )\label{eq:SingularKernel}
\end{align}
In particular, for $\ell=1,\dotsc,m$ and $g=1,\dotsc,r$\;\;$K=[K(p_{\ell } ,p_{g} )]$ is a behavioural matrix operator. \hfill $\Box$
\end{defn}
\noindent The kernel accommodates any Lebesgue integrable PWF compared to any linear probability scheme. See e.g., \cite{Prelec1998} and \cite{Luce2001} for axioms on PWF, and \cite{Machina1982} for linear probability schemes.  Evidently, $\hat{K}$ is an averaging operator induced by $K$, and it suggests that the \emph{Newtonian potential} or \emph{logarithmic potential} on loss-gain probability domains are admissible kernels. The estimation characteristics of these kernels are outside the scope of this paper. The interested reader is referred to the exposition in \cite{Stein2010}. Let $\mathfrak{T}$ be a partially ordered index set on probability domains, and $\mathfrak{T}_\ell$ and $\mathfrak{T}_g$ be subsets of $\mathfrak{T}$ for indexed loss and indexed gain probabilities, respectively. So that
\begin{equation}
    \mathfrak{T}=\mathfrak{T}_\ell\cup\mathfrak{T}_g\label{eq:IndexProbDomain}
\end{equation}
For example, for $\ell\in\mathfrak{T}_\ell$ and $g\in\mathfrak{T}_g$ if $\ell =1,\ldots ,m;\; \; g=1,\ldots ,r$ the index $\mathfrak{T}$ gives rise to a $m\times r$ matrix operator $K=[K(p_\ell ,p_g)]$. The ``adjoint matrix`` $K^{*} =[K^{*} (p_{g} ,p_{\ell } )]=-[K(p_{\ell } ,p_{g} )]^{T} $. So $K$ transforms gain domain into loss domain--implying fear of loss, or risk aversion, for prior probability $p_\ell$. While $K^*$ is an \textit{Euclidean motion} that transforms loss domain into hope of gain from risk seeking for prior gain probability $p_g$.
\begin{defn}[Behavioural operator on loss gain probability domains]\label{defn:BehavioralOper}
   Let $K$ be a behavioral operator constructed as in \eqref{eq:BehavEconfIndex}. Then the adjoint behavioural operator is a rotation and reversal operation represented by $K^*=-K^T$. \hfill $\Box$
\end{defn}
Thus, $K^*$ captures \cite{Yaari1987} ``reversal of the roles of probabilities and payments", ie, the preference reversal phenomenon in gambles first reported by \cite{LichtensteinSlovic1973}. Moreover, $K$ and $K^*$ are generated (in part) by prior probability beliefs consistent with \cite{GilboaSchmeilder1989}. The ``axis of spin" induced by this behavioural rotation is perpendicular to the plane in which $K$ and $K^*$ operates as follows.
\begin{figure}
   \centering
%%----start  of  first  figure----
      \begin{minipage}[h]{0.4\linewidth}
         %\centering
         %\includegraphics[width=1in]{graphic}
         \captionof{figure}{Behavioural operations on probability domains}
         \label{fig:RepTheoryRiskOperPWF}
         \centerline{\includegraphics[scale=.6]{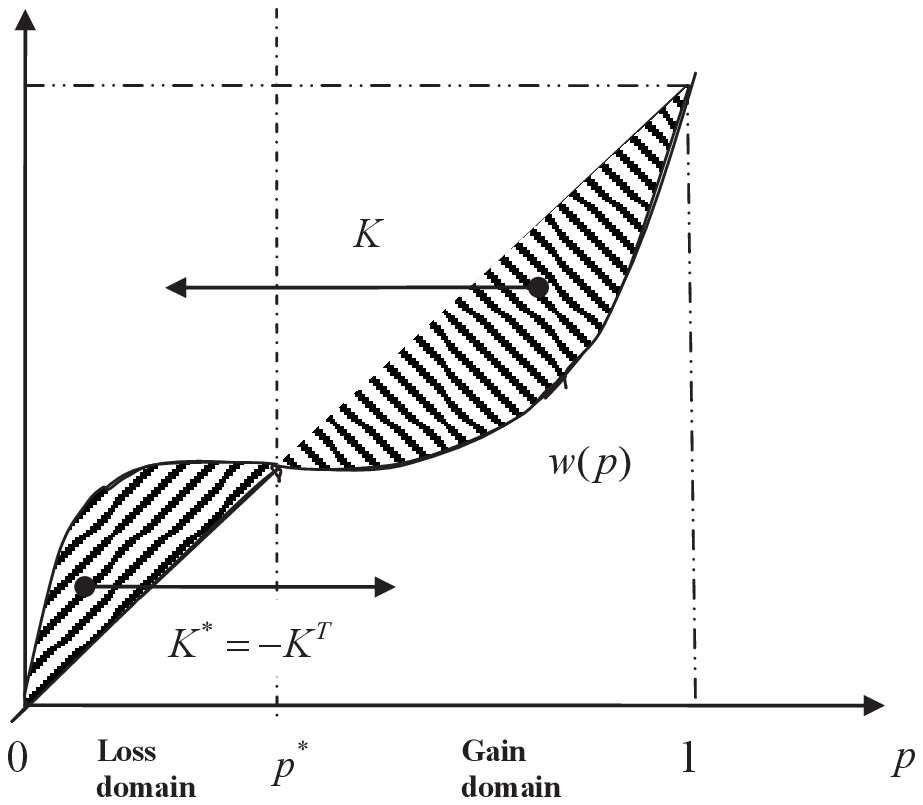}}
         %\caption{Small  Box}  \label{fig:side:a}
      \end{minipage}
   \hspace{1cm}
%%----start  of  second  figure----
   \begin{minipage}[h]{0.4\linewidth}
      \centering
      \captionof{figure}{Phase portrait of behavioural orbit}
      \label{fig:TKPWFPhhaseFunc}
      \centerline{\includegraphics[scale=.6]{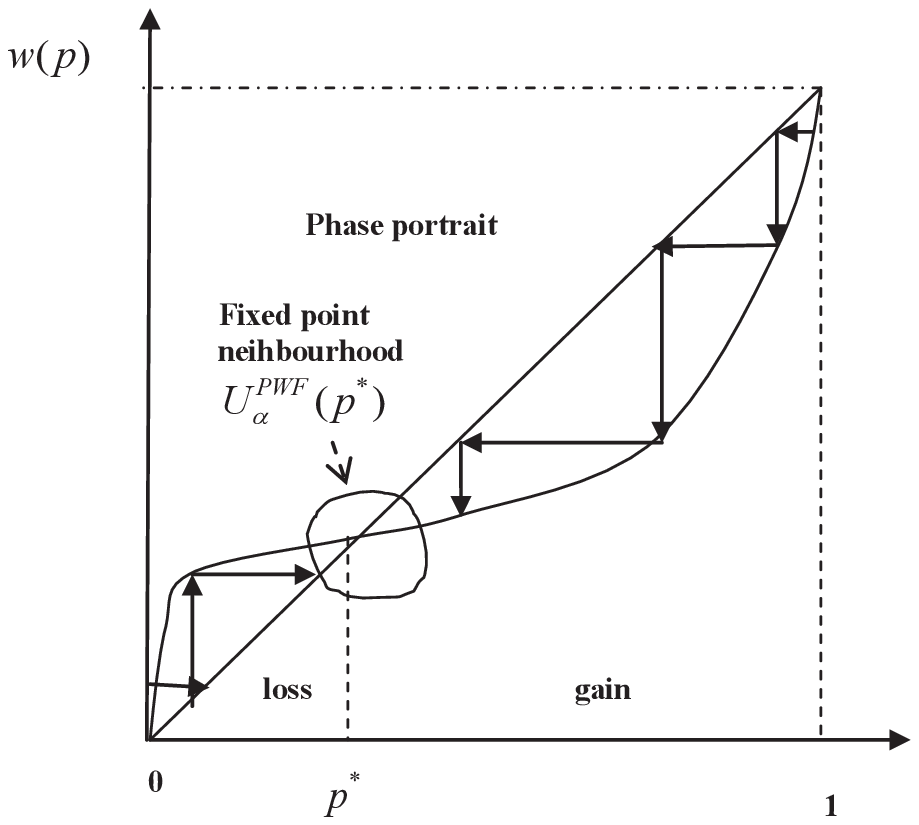}}
      % \caption{Big  Box}    \label{fig:side:b}
   \end{minipage}
\end{figure}
\subsubsection{Ergodic behaviour}\label{subsubsec:ErgodicBehave}
Consider the composite behavioural operator $T=K^T\circ K$ and its adjoint $T^*=-T^T=-T$ which is \emph{skew symmetric}.\\
\textbf{What $T^*$ does}. By definition, $T^*$ takes a vector valued function in gain domain (through $K$) that is transformed into [fear of] loss domain, and sends it back from a reduced part of loss domain (through $K^*$) where it is transformed into [hope of] gain domain. In other words, $T^*$ is a \emph{contraction mapping} of loss domain. A subject who continues to have hope of gain in the face of repeated losses in that cycle will be eventually ruined. By the same token, an operator $\widetilde{T}^*=-K\circ K^T = KK^*=-\widetilde{T}$ is a contraction mapping of gain domain. In this case, a subject who fears loss of her gains will eventually stop before she looses it all. Thus, the composite behavior of $K$ and $K^*$ is ergodic because it sends vector valued functions back and forth across loss-gain probability domains in a ``3-cycle" while reducing the respective domain in each cycle. These phenomena are depicted on page \pageref{fig:RepTheoryRiskOperPWF}. There, \autoref{fig:RepTheoryRiskOperPWF} depicts the behavioural operations that transform probability domains. \autoref{fig:TKPWFPhhaseFunc} depicts the corresponding phase portrait and a fixed point neighbourood basis set. In what follows, we introduce a behavioural ergodic theory by analyzing $T$. The analysis for $\tilde{T}$ is similar so it is omitted. Let
\begin{align}
   T=K^T\circ K &=K^TK \Rightarrow T^*=-(K^T\circ K)^T=-K^TK=K^*K=-T\\
   \intertext{Define the \emph{range} of $K$ by}
   \Delta_{K} &= \left\{g|\;Kf=g, \;\;f\in\mathfrak{D}(K)\right\}\\
   T^*f &=-K^TKf =K^*g \Rightarrow g\in\Delta_{K}\cap\mathfrak{D}(K^*)\\
   \Delta_{T^*} &= \left\{K^*g|\;g\in \Delta_{K}\cap\mathfrak{D}(K^*)\right\}\subset\mathfrak{D}(K^*)
\end{align}
Thus, $T^*$ \emph{reduces} $K^*$, i.e. it reduces the domain of $K^*$, and $T$ is \emph{skew symmetric} by construction.
%\subsection{Proof of Lemma \autoref{lem:ErgodicConfidence}--Ergodic Confidence}\label{apx:ProofErgodicLemma}
%Before we begin the proof, we state and prove the following Lemma, and restate the lemma to be proved--Lemma %\autoref{lem:ErgodicConfidence} for convenience.
\begin{lem}[Graph of confidence]\label{lem:ConfidenceGraph}~\\
   Let $\mathcal{D}(K),\;\mathcal{D}(K^*)$ be the domain of $K$, and $K^*$ respectively. Furthermore, construct the operator $T=K^*K$. We claim (i) that $T$ is a bounded linear operator, and (ii) that for $f\in\mathcal{D}(K)$ the graph $(f,Tf)$ is closed.\hfill $\Box$
\end{lem}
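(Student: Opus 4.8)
The plan is to realise the confidence operator $K$ as a genuine integral (Hilbert--Schmidt) operator between the $L^{2}$ spaces attached to the loss and gain probability domains, to deduce boundedness of the composite $T=K^{*}K$ from boundedness of $K$ together with submultiplicativity of the operator norm, and finally to obtain the closed graph in (ii) as the automatic consequence that \emph{every} bounded linear operator is closed. First I would fix the functional-analytic setting: regard $K$ as the integral operator with kernel $K(p_\ell,p_g)$ from $L^{2}(\mathcal{P}_g)$ into $L^{2}(\mathcal{P}_\ell)$, with $\mathcal{D}(K)$ its natural dense domain. In the rank-ordered finite case $\ell=1,\dots,m$, $g=1,\dots,r$ the operator is literally the matrix $K=[K(p_\ell,p_g)]$ of \autoref{defn:BehaveMatrixOperr}, for which (i) and (ii) are immediate since finite matrices are bounded and everywhere defined; the substantive case is the continuous one.

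The decisive structural input is that the kernel is \emph{uniformly bounded}. By \autoref{defn:BehaveMatrixOperr} the confidence index maps $\mathcal{P}_\ell\times\mathcal{P}_g\to[-1,1]$, so $|K(p_\ell,p_g)|\le 1$; equivalently, since $w(p)\in[0,1]$ and $p\in[0,1]$ give $|w(p)-p|\le 1$, the integral defining $K$ over an interval of length at most $1$ has modulus at most $1$. Because $\mathcal{P}_\ell$ and $\mathcal{P}_g$ are subintervals of $[0,1]$ of finite Lebesgue measure, the kernel therefore lies in $L^{2}(\mathcal{P}_\ell\times\mathcal{P}_g)$, so $K$ is Hilbert--Schmidt and in particular bounded with $\|K\|_{op}\le\|K(\cdot,\cdot)\|_{L^{2}}<\infty$. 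By \autoref{defn:BehavioralOper} the adjoint operator $K^{*}=-K^{T}$ is, up to the sign convention, the Hilbert-space adjoint of $K$; it is hence itself Hilbert--Schmidt with $\|K^{*}\|_{op}=\|K\|_{op}$, and the range reduction $\Delta_{T^{*}}\subset\mathfrak{D}(K^{*})$ recorded before the lemma shows that the composite $K^{*}K$ is defined on all of $\mathcal{D}(K)$. Boundedness of $T$ then follows from
\begin{equation}
\|T\|_{op}=\|K^{*}K\|_{op}\le\|K^{*}\|_{op}\,\|K\|_{op}=\|K\|_{op}^{2}<\infty,
\end{equation}
and linearity of $T$ is inherited from linearity of integration and of the adjoint, which establishes claim (i). As a by-product $T=K^{*}K$ is self-adjoint and compact, though neither property is needed below.

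For claim (ii) I would invoke the elementary fact that a bounded linear operator has closed graph. Concretely, take $f_n\in\mathcal{D}(K)$ with $f_n\to f$ and $Tf_n\to h$; continuity of $T$ forces $Tf_n\to Tf$, and uniqueness of limits in $L^{2}$ gives $h=Tf$, so the limit point $(f,h)=(f,Tf)$ again lies in $\{(f,Tf):f\in\mathcal{D}(K)\}$, i.e. the graph is closed. I expect the only genuine obstacle to lie in the domain bookkeeping of the previous paragraph, namely verifying that the range of $K$ lands inside $\mathfrak{D}(K^{*})$ so that $K^{*}K$ is everywhere defined on $\mathcal{D}(K)$ rather than merely densely defined; this is exactly what the ergodic reduction $\Delta_{T^{*}}\subset\mathfrak{D}(K^{*})$ supplies, and once $K$ is recognised as Hilbert--Schmidt (hence bounded and everywhere defined on $L^{2}(\mathcal{P}_g)$) the compatibility, and with it both claims, becomes automatic.
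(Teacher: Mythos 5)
Your proof is correct, but it takes a genuinely different route from the paper's. For claim (i) the paper does not use the Hilbert--Schmidt structure at all: it argues that the fixed point $p^*$ induces ``singularity'' in $K$ and $K^*$, asserts that $T$ is a contraction mapping ``by construction,'' and then runs an approximation argument with a sequence $T_n\to T$ in which the operator norm is written as a product of spectral values, $\|T_n\|=\prod_{j=1}^{N_n}\lambda_j$. That spectral-product identity is not a standard fact (the operator norm is the largest singular value, not a product of eigenvalues), so the paper's boundedness argument is heuristic where yours is airtight: observing that $|K(p_\ell,p_g)|\le 1$ on the finite-measure rectangle $\mathcal{P}_\ell\times\mathcal{P}_g$ makes $K$ Hilbert--Schmidt, and submultiplicativity gives $\|K^*K\|\le\|K\|^2$ immediately, with compactness and an explicit bound $\|K\|_{HS}\le 1$ thrown in for free. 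For claim (ii) the two arguments are closer in spirit: the paper settles the domain bookkeeping by the skew-symmetry relation $T^*=-T$ (so that $f^*=Kf$ lands in $\mathcal{D}(K^*)$) and then cites the Closed Graph Theorem from Yosida, invoking it in the converse (trivial) direction that a bounded operator has closed graph; you instead get everywhere-definedness directly from the Hilbert--Schmidt realisation and prove closedness of the graph by the elementary sequence argument, which is the honest way to state what the paper's citation actually uses. What the paper's route buys is a narrative link between boundedness and the behavioural content (singularity at the fixed point, domain reduction across the loss--gain cycle) that feeds its ergodic discussion; what your route buys is rigour and quantitative control, at the cost of not engaging with that interpretive layer. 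One small caveat: your fallback appeal to the range-reduction $\Delta_{T^*}\subset\mathfrak{D}(K^*)$ recorded before the lemma is not needed once $K$ is Hilbert--Schmidt, and it is the weaker of your two justifications, since that inclusion is itself asserted rather than proved in the paper.
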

\begin{proof}
   See \autoref{apx:ProofLemConfGraph}
\end{proof}

\begin{prop}[Ergodic confidence]\label{prop:ErgodicConfidence}~\\
   Let $T=K^*K\;$,  $f\in\mathcal{D}(T)$ and $\mathcal{D}(K)\cap\mathcal{D}(K^*)\subseteq\mathcal{D}(T)$. Define the reduced space $\mathcal{D}(\hat{T})=\{f|\;f\in\mathcal{D}(K)\cap\mathcal{D}(K^*)\subseteq\mathcal{D}(T)$. And let $\mathfrak{B}$ be a Banach-space, i.e. normed linear space, that contains $\mathcal{D}(\hat{T})$. Let $(\mathfrak{B},\mathfrak{T},Q)$ be a probability space, such that $Q$ and $\mathfrak{T}$ is a probability measure and $\sigma$-field of Borel measureable subsets, on $\mathfrak{B}$, respectively. We claim that $Q$ is measure preserving, and that the orbit or trajectory of $\hat{T}$ induces an ergodic component of confidence.\hfill $\Box$
\end{prop}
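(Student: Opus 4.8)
The plan is to verify, in turn, the two standard ingredients of an abstract measure-preserving dynamical system: invariance of $Q$ under the reduced operator, and triviality of the $\hat{T}$-invariant $\sigma$-subalgebra. By \autoref{lem:ConfidenceGraph} the operator $T=K^{*}K$ is bounded and linear with closed graph, so its restriction $\hat{T}$ to $\mathcal{D}(\hat{T})$ is continuous; continuity guarantees $\hat{T}^{-1}(A)\in\mathfrak{T}$ for every $A\in\mathfrak{T}$, so the push-forward $Q\circ\hat{T}^{-1}$ is well defined and iteration (or, in the spirit of the paper's Lie-theoretic framework, the one-parameter group generated by $\hat{T}$) is legitimate on $(\mathfrak{B},\mathfrak{T},Q)$. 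I would also record at the outset that the adjoint relation $K^{*}=-K^{T}$ of \autoref{defn:BehavioralOper} presupposes a transpose/inner-product structure, so the reduced space $\mathcal{D}(\hat{T})=\mathcal{D}(K)\cap\mathcal{D}(K^{*})$ may be treated as a subspace of a Hilbert space on which $\langle\cdot,\cdot\rangle$ is defined.

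For measure preservation I would exploit the skew-symmetry $T^{*}=-T$ recorded in \autoref{subsubsec:ErgodicBehave}. A bounded skew-symmetric operator is the infinitesimal generator of a one-parameter group of orthogonal (rotation) transformations $e^{t\hat{T}}$: indeed $i\hat{T}$ is self-adjoint on the reduced space, so each $e^{t\hat{T}}$ is a linear isometry, $\langle e^{t\hat{T}}f,e^{t\hat{T}}f\rangle=\langle f,f\rangle$. Taking $Q$ to be the rotation-invariant measure canonically attached to this orthogonal structure, invariance of its covariance under the isometry, $\hat{T}C\hat{T}^{*}=C$, yields $Q\circ\hat{T}^{-1}=Q$; this is the precise content of ``$Q$ is measure preserving.'' In the concrete $m\times r$ matrix realization of $K$ discussed before \eqref{eq:IndexProbDomain}, the reduced space is finite dimensional and $Q$ is simply the normalized rotation-invariant (Haar/surface) measure on the orbit sphere, so invariance under the rotation $\hat{T}$ is immediate and the infinite-dimensional subtleties disappear.

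For the ergodic conclusion I would apply the von Neumann mean ergodic theorem to the isometry $\hat{T}$: the Ces\`{a}ro averages $\frac{1}{N}\sum_{n=0}^{N-1}\hat{T}^{n}f$ converge in $L^{2}(Q)$ to the orthogonal projection $Pf$ onto the invariant subspace $\ker(\hat{T}-I)$. It then remains to show this invariant subspace is trivial, so that $Pf=\int f\,dQ$ and the orbit of $\hat{T}$ is a single ergodic component of confidence. Here I would invoke the domain-reduction property of \autoref{subsubsec:ErgodicBehave}: the passage to $\mathcal{D}(\hat{T})$ has already quotiented out the fixed and decaying modes of $T$ (the directions on which $T^{*}$ strictly contracts across the loss--gain cycle), leaving $\hat{T}$ with purely rotational, non-resonant spectrum and hence no nontrivial fixed directions. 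Consequently every $\hat{T}$-invariant Borel set has $Q$-measure $0$ or $1$, and by the ergodic decomposition theorem the closed orbit of a generic $f$ supports exactly one ergodic measure, which is the claimed ergodic component.

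The hard part will be reconciling the contraction/domain-reduction property with measure preservation, since a genuine strict contraction cannot preserve a probability measure. The entire argument therefore hinges on identifying the reduced space $\mathcal{D}(\hat{T})$ correctly as the \emph{maximal} invariant subspace on which $\hat{T}$ degenerates from a contraction into an isometry (a rotation), and on confirming that $Q$ restricted to that subspace is simultaneously well defined on a possibly infinite-dimensional $\mathfrak{B}$ (where no Lebesgue measure exists) and genuinely rotation-invariant. Once this identification is made precise --- most cleanly in the finite-dimensional matrix case --- the remaining steps are a routine appeal to the mean ergodic theorem and to ergodic decomposition.
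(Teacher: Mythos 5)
Your proposal takes a genuinely different route from the paper's appendix proof, but it contains two gaps that break it. The first is in the measure-preservation step: you establish invariance for the wrong map. Skew-symmetry of $\hat{T}$ does make the exponential flow $e^{t\hat{T}}$ orthogonal (since $i\hat{T}$ is self-adjoint), but the dynamical system in the proposition is the discrete iteration $f,\hat{T}f,\hat{T}^2f,\dotsc$, and a skew-symmetric operator is not itself an isometry: skew-symmetry gives $\langle \hat{T}f,f\rangle=0$, not $\|\hat{T}f\|=\|f\|$. Your covariance identity $\hat{T}C\hat{T}^{*}=C$ fails outright, because for skew-symmetric $\hat{T}$ one has $\hat{T}C\hat{T}^{*}=-\hat{T}C\hat{T}$; taking $C=I$ this equals $-\hat{T}^{2}$, which is the identity only when $\hat{T}$ is a complex structure. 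So rotation-invariance of $Q$ under the flow $e^{t\hat{T}}$ does not deliver $Q\circ\hat{T}^{-1}=Q$, which is what the proposition asserts. The paper instead argues about $\hat{T}$ itself: it uses $T^{*}=-T$ to show $\hat{T}$ maps $\mathcal{D}(\hat{T})$ back into itself, posits set-level invariance $\hat{T}(A)=A$ for measurable $A$ in the reduced domain, and reads off $Q(\hat{T}^{-1}(A))=Q(A)$ --- loose, but at least aimed at the right map.

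The second gap is in the ergodicity step. Your claim that passing to $\mathcal{D}(\hat{T})$ ``has already quotiented out the fixed and decaying modes of $T$'' is unsupported: $\mathcal{D}(\hat{T})=\mathcal{D}(K)\cap\mathcal{D}(K^{*})$ is defined by a domain intersection, not by any spectral decomposition, so nothing guarantees the residual spectrum is ``purely rotational.'' Worse, there is a level confusion: ergodicity of $(\mathfrak{B},\mathfrak{T},Q,\hat{T})$ concerns fixed functions of the Koopman operator $g\mapsto g\circ\hat{T}$ on $L^{2}(Q)$ (equivalently, invariant sets having $Q$-measure $0$ or $1$), not fixed vectors of $\hat{T}$ acting linearly on $\mathfrak{B}$. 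If you really apply the mean ergodic theorem to the linear operator $\hat{T}$, skew-symmetry forces $\ker(\hat{T}-I)=\{0\}$ (from $\hat{T}f=f$ and $\langle\hat{T}f,f\rangle=0$ one gets $\|f\|^{2}=0$), so the Ces\`{a}ro averages converge to $0$, not to $\int f\,dQ$; the identification $Pf=\int f\,dQ$ belongs to the Koopman picture, which you never set up. The paper avoids this particular trap by applying the Birkhoff-Khinchin pointwise ergodic theorem once measure preservation is (claimed to be) in hand, obtaining an a.s.\ limit $f^{*}$ that is $\hat{T}$-invariant and then equating time and space averages. Finally, your closing paragraph concedes that the contraction-versus-invariance tension is unresolved; that concession is precisely the missing idea --- the identification of $\mathcal{D}(\hat{T})$ as a maximal subspace on which $\hat{T}$ becomes measure-preserving is never established, and without it neither of the proposition's two claims follows from your argument.
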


\begin{proof}
   See \autoref{apx:ProofPropErgodicConfi}
\end{proof}
\begin{rem}
  One of the prerequisites for an ergpdic theory is the existence of a Krylov-Bogulyubov type invariant probability measure. See \cite[pg.~139]{Jost2005}. Using entropy and information, \cite[Thm.~3.2]{Cadogan2012} introduced canonical harmonic probability weighting functions with inverted S-shape in loss-gain probability domains. So that the phase portrait in Figure \autoref{fig:TKPWFPhhaseFunc} on page \pageref{fig:TKPWFPhhaseFunc}, based on an inverted S-shaped probability weighting function, is an admissible representation of the underlying chaotic behavioural dynamical system. \hfill $\Box$
\end{rem}
\begin{rem}
   Let $B$ be the set of all probabilities $p$ for which $f(p)\in\mathcal{D}(\hat{T})$. The maximal of such set $B$ is called the \emph{ergodic basin} of $Q$. See \cite[pg.~141]{Jost2005}.
   \begin{flushright}
      $\Box$
   \end{flushright}
\end{rem}
\subsubsection{Axis of spin induced by rotation} \label{subsubsection}
Let $\mathbf{x}(t)=a(t)\ \mathbf{i} + b(t)\ \mathbf{j}$ be a [vector valued]  curve in the domain $[\mathcal{D}(K)]$ of $K$ (or $[\mathcal{D}(K^*)]$ of $K^*$) with respect to a parameter $t$ such that $\mathbf{i}$ and $\mathbf{j}$ are unit vectors along the coordinate axes; and $a(t)$ and $b(t)$ be parametric curves. The ``axes of spin" for $\mathbf{x}(t)$ is perpendicular to $\mathbf{i}$ and $\mathbf{j}$. If $\mathbf{x}$ and $\mathbf{y}$ are in the same plane and inclined at an angle $\theta$ between them, then $\mathbf{x}\wedge\mathbf{y}$ is a vector perpendicular to the plane. The corresponding unit vector is given by
\begin{equation}
   \hat{\mathbf{c}}(t) = \frac{\mathbf{x}(t)\wedge\mathbf{y}(t)}{|\mathbf{x}(t)||\mathbf{y}(t)|\sin(\theta)}
\end{equation}
\begin{defn}[Spin vector] \cite[pp.~16-17]{Wardle2008}\label{defn:SpinVector}~\\
   The spin vector of \textbf{x}(t) $\in $ G, where t is a parameter, is defined as
   $$\frac{\mathbf{x}(t)\wedge\dot{\mathbf{x}}(t)}{\mathbf{x}(t)\cdot\mathbf{x}(t)}$$
   where ${\mathbf x}\left({\rm t}\right)\wedge \dot{{\mathbf x}}\left({\rm t}\right)=\ \left|{\mathbf x}\left({\rm t}\right)\right|\left|\dot{{\mathbf x}}\left({\rm t}\right)\right|sin(\theta) $, for $\theta $ the angle between \textbf{x} and $\dot{{\mathbf x}}$; and $\mathbf{x}(t).\mathbf{x}(t) =|\mathbf{x}(t)|^{2}$. \hfill $\Box$
\end{defn}
\begin{rem}
   The direction of the ``spin vector'' determines whether an agent is risk averse or risk seeking at that instant in our model. \hfill $\Box$
\end{rem}
\begin{defn}[Curvature] \cite[pg.~18]{Wardle2008}\label{defn:Curvature}~\\
   The curvature $\kappa$ is given by $$\kappa = |\pmb{t} \wedge  \pmb{t}^\prime|$$  where   $\mathbf{t}$  is the unit tangent vector relative to arc-length $s$  as parameter, and  $\pmb{t}^\prime$ is the derivative of $\pmb{t}$ with respect to $s$. In the context of a vector $\mathbf{x}(t)$ we have $$\kappa=\frac{\mathbf{x}^{\prime\prime}(t)}{[1+\mathbf{x}^\prime(t)^2]^{\tfrac{3}{2}}}$$ \hfill $\Box$
\end{defn}
\begin{defn}[Binormal]\cite[pg.~18]{Wardle2008}\label{defn:Binormal}
   The unit normal vector $\hat{\mathbf{b}}$ drawn at a point $P$ on a curve $\Gamma$ in the direction of the vector $\pmb{t} \wedge  \pmb{t}^\prime$ is called the \emph{binormal} at $P$. Specifically, $$\hat{\mathbf{b}}=\frac{\pmb{t} \wedge  \pmb{t}^\prime}{|\pmb{t} \wedge  \pmb{t}^\prime|}$$ or $$\hat{\mathbf{b}}=\frac{\mathbf{x}^\prime(t)\wedge\mathbf{x}^{\prime\prime}(t)}{\left(\frac{\mathbf{x}^{\prime\prime}(t)}{[1+\mathbf{x}^\prime(t)^2]^{\tfrac{3}{2}}}\right)}$$ \hfill $\Box$
\end{defn}
\begin{defn}[Torsion]\cite[pg.~19]{Wardle2008}\label{defn:Torsion}
   The rate of turn of the \emph{binormal} with respect to arc length $s$ at a point $P$ of a curve $\Gamma$ is called the \emph{torsion} represented by the \emph{triple scalar product}
   $$\tau=\mathbf{t}\cdot(\mathbf{t}^\prime\wedge\mathbf{t}^{\prime\prime})\kappa^2$$ which can also be written as $$\tau=\frac{(\mathbf{x}^\prime(t)\wedge\mathbf{x}^{\prime\prime}(t))\cdot\mathbf{x}^{\prime\prime\prime}(t)}{|\mathbf{x}^\prime(t)\wedge\mathbf{x}^{\prime\prime}(t)|^2}$$
   \hfill $\Box$
\end{defn}
\begin{rem}
   \cite[pg.~15]{Struik1961} defines \emph{torsion} as the rate of change of the osculating plane. The latter being the plane subtended by two consecutive tangent lines. For our purposes, torsion is roughly equal to the rate of change of Arrow-Pratt risk measure. \hfill $\Box$
\end{rem}
\noindent In \autoref{fig:RepTheoryRiskOperPWF} and \autoref{fig:TKPWFPhhaseFunc} torsion exists in a plane orthogonal to the axis of rotation induced by behavioural spin.
\section{Lie algebra of risk operators}\label{subsec:LieAlgebraOfRisk}
We define our risk operator as follows.
\begin{defn}[Logarithmic differential operator]\label{defn:LogarithmicOperator}
   A logarithmic differential operator $\ln D$ is defined for all functions $u$ in the domain $\mathfrak{D}(D)$ of $D$ such that $$(\ln Du)(x)=\text{sgn}(u^\prime(x))\ln|u^\prime(x)|,\;u^\prime(x)\neq 0$$ This definition is general enough to handle $u^\prime(x)<0$ and is undefined for $u^\prime(x)=0$. \hfill $\Box$
\end{defn}
\begin{defn}[Arrow-Pratt risk operator]\label{defn:ArrowPrattOperator}
   Let $X$ be a compact choice space, and $u\in C^2_0(X)\cap\mathfrak{D}(D)$ be a twice differentiable continuous utility function. Let $D$ be the differential operator so that $(Du)(x)=u^\prime(x)$ and $(D^2u)(x)=u^{\prime\prime}(x)$. Then the Arrow-Pratt risk operator $A$ for the risk measure $r(x)$ is given by $$r(x)=(Au)(x),\qquad\qquad A=-D\ln D=-\left(\frac{D^2}{D}\right)$$
   In the sequel we use $A_{ra}$, and $A_{rs}$ for risk averse and risk seeking operations respectively.\hfill $\Box$
\end{defn}
\noindent Let $X\subset \mathbf{R}^n$ be an open space of choice vectors, i.e., n-dimensional basket of goods; $G$ be a compact group in $X$; \textbf{x,y} $\in $ G;  and $\mathbf{u}:G\cap\mathcal{D}(K)\to V$ $\subset $ $\mathbf{R}^n$ be a vector valued utility function. By Definition \autoref{defn:Manifold}, $G$ is a topological manifold, i.e. a \emph{topological group}. Assume that $V$ is a Lie group germ induced by $G$.  For example $V$ could be a local budget set  $V\left(\mathbf{p},I\right):=\ \left\{\mathbf{x}\ \in \ \mathbf{R}^n_+\ :\ \mathbf{p}·\mathbf{x}\ \le I\right\}$ for income level $I$, price vector $ \mathbf{p} $, and consumption bundle $\mathbf{x}\in \mathbf{R}^n_+$. Let A${}_{ra}$ = $-D\ln D$ be the operator for Arrow-Pratt risk aversion (ra) described in Definition \autoref{defn:ArrowPrattOperator}. The corresponding infinitesimal vectors for $\mathbf{x},\ \mathbf{y} \in G$ are $\pmb{\alpha}=\left(\tfrac{\partial\mathbf{x}}{\partial t}\right)_{t=0}$ and $\pmb{\beta}=\left (\tfrac{\partial\mathbf{y}}{\partial t}\right )_{t=0}$, which stem from the expansion
\begin{equation}
   \mathbf{x} = \pmb{\alpha}t+\dotsc\qquad\qquad \mathbf{y}=\pmb{\beta}t+\dotsc\label{eq:VecExpansion}
\end{equation}
This gives rise to the following relationship between group operations in $G$ and vector addition of infinitesimal vectors:
\begin{thm}[Infinitesimal vectors of group product]\cite[pg.~104]{Guggenheim1977}\label{thm:InfiniVecGroupProd}~\\
   Let $\mathbf{x}, \ \mathbf{y}\in C^n(X)$ be curves in $G$, with infinitesimal vectors $\pmb{\alpha}$ and $\pmb{\beta}$. The curve $\mathbf{x}\mathbf{y}$ is differentiable and it has infinitesimal vector $\pmb{\alpha} + \pmb{\beta}$. \hfill $\Box$
\end{thm}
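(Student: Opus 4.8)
The plan is to work directly from the Lie group germ axioms (L1)--(L4) together with the $C^2$ regularity of the group product $\mathbf{f}$, Taylor-expanding $\mathbf{f}$ about the identity and composing with the curve expansions in \eqref{eq:VecExpansion}. Recall from Definition \autoref{defn:LieGroup} that $\mathbf{e}=\mathbf{0}$ is the origin of $\mathbf{R}^n$, and that the product curve is $(\mathbf{xy})(t)=\mathbf{f}(\mathbf{x}(t),\mathbf{y}(t))$. First I would record that the curves pass through the identity at $t=0$, so that $\mathbf{x}(0)=\mathbf{y}(0)=\mathbf{0}$ and the expansions read $\mathbf{x}(t)=\pmb{\alpha}t+o(t)$, $\mathbf{y}(t)=\pmb{\beta}t+o(t)$, with $\pmb{\alpha}=\left(\tfrac{\partial\mathbf{x}}{\partial t}\right)_{t=0}$ and $\pmb{\beta}=\left(\tfrac{\partial\mathbf{y}}{\partial t}\right)_{t=0}$.

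The heart of the argument is the first-order behaviour of $\mathbf{f}$ at the identity. By axiom (L2), $\mathbf{f}\in C^2(\mathbf{R}^n)$, so it admits a first-order Taylor expansion about $(\mathbf{0},\mathbf{0})$ with second-order remainder. I would pin down the linear part using (L4): differentiating the identity $\mathbf{f}(\mathbf{0},\mathbf{y})=\mathbf{y}$ in $\mathbf{y}$ gives $D_2\mathbf{f}(\mathbf{0},\mathbf{0})=I$, while differentiating $\mathbf{f}(\mathbf{x},\mathbf{0})=\mathbf{x}$ in $\mathbf{x}$ gives $D_1\mathbf{f}(\mathbf{0},\mathbf{0})=I$; together with $\mathbf{f}(\mathbf{0},\mathbf{0})=\mathbf{0}$ this yields
$$\mathbf{f}(\mathbf{x},\mathbf{y})=\mathbf{x}+\mathbf{y}+R(\mathbf{x},\mathbf{y}),\qquad R(\mathbf{x},\mathbf{y})=O\!\left(|\mathbf{x}|^2+|\mathbf{y}|^2\right).$$

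Substituting the curve expansions then gives $(\mathbf{xy})(t)=\mathbf{x}(t)+\mathbf{y}(t)+R(\mathbf{x}(t),\mathbf{y}(t))=(\pmb{\alpha}+\pmb{\beta})t+o(t)$, since the remainder is quadratic in arguments that are themselves $O(t)$, hence $o(t)$. Differentiability of the composite follows from the chain rule applied to $\mathbf{f}\in C^2$ and the differentiable curves $\mathbf{x},\mathbf{y}$, and reading off the coefficient of $t$ produces the claimed infinitesimal vector $\left(\partial(\mathbf{xy})/\partial t\right)_{t=0}=\pmb{\alpha}+\pmb{\beta}$.

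I expect the main obstacle to be the clean justification that the mixed and higher-order contributions vanish to first order: specifically, ensuring that the $C^2$ hypothesis in (L2) is exactly what guarantees the Taylor remainder is genuinely $O(t^2)$ as the curves approach the identity, and that the Jacobians $D_1\mathbf{f},D_2\mathbf{f}$ at the identity are the identity matrix rather than some other linear map — which is precisely the point where axiom (L4) is indispensable. I note that the associativity axiom (L3) plays no role in this first-order statement; it becomes relevant only for the second-order Lie bracket results built on Definition \autoref{defn:LieProduct}.
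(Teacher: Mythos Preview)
Your argument is correct and is the standard textbook derivation: expand the $C^2$ product map $\mathbf{f}$ about $(\mathbf{0},\mathbf{0})$, use (L4) to force both partial Jacobians at the identity to be $I$, and read off the first-order term after composing with the curve expansions. Note, however, that the paper does not supply its own proof of this theorem at all --- it is stated with a citation to \cite[pg.~104]{Guggenheim1977} and closed with a $\Box$, then immediately invoked in the Taylor expansion leading to \eqref{eq:TaylorExpandUtil}. So there is nothing in the paper to compare against; your proposal simply fills in the omitted classical argument, and does so in exactly the way Guggenheim's treatment proceeds.
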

Second order Taylor expansion\footnote{See \cite[pp.~207-208]{TaylorMann1983}.} of ${\mathbf u}\left({\mathbf x},{\mathbf y}\right)_k$ and \eqref{eq:VecExpansion} around the origin $\mathbf{e}$ suggest that:
\begin{align}
   {\mathbf u}\left({\mathbf x},{\mathbf y}\right) &= \mathbf{u}(\mathbf{e},\mathbf{e}) + \mathbf{u}(\mathbf{x},\mathbf{e}) + \mathbf{u}(\mathbf{y},\mathbf{e}) + \left(\frac{\partial}{\partial\mathbf{x}}{\mathbf u}\left({\mathbf x},{\mathbf y}\right)+\frac{\partial}{\partial\mathbf{y}}{\mathbf u}({\mathbf x},{\mathbf y})\right)^2+\text{rem}\\
  &= (\pmb{\alpha} + \pmb{\beta})t + \frac{1}{2}\left((\pmb{\alpha} + \pmb{\beta})t\right)^2+\text{rem}\label{eq:TaylorExpandUtil}
\end{align}
\begin{align}
  \text{Let}\;\;\theta_{ij}\alpha_i\beta_j=\left(\alpha^2_i+\beta^2_j\right)\hspace{2.79in}&\label{eq:StructKonst1}\\
  \intertext{The typical element of the squared term in \eqref{eq:TaylorExpandUtil} is of the form}
  \left(\alpha^2_i+2\alpha_i\beta_j+\beta^2_j\right)= \left((2+\theta_{ij})\alpha_i\beta_j\right)\hspace{1in}&\\
  \Rightarrow\; \text{$k$-th element coefficient in vector is}\;\;a_{k.ij} = \left((2+\theta_{ij})\right)_k\hfill\qquad\qquad\qquad\qquad \label{eq:StructKonst2}&
\end{align}
So that for differentiable curves \textbf{x}(t) and \textbf{y}(t), with parameter t, i.e., one parameter group of motions, the Lie group structure for risk associated to \textbf{u}(\textbf{x},\textbf{y}), i.e., the infinitesimal generator of risk, is determined by:
\begin{align}
   &{\left({{\rm A}}_{ra}{\mathbf u}\right)}_k={\left(\left(-{\rm DlnD}\right){\mathbf u}\left({\mathbf x},{\mathbf y}\right)\right)}_k=(\pmb{\alpha}\pmb{\beta})_k\\
   &=-{\rm DlnD}\left({{\rm x}}_k\left({\rm t}\right){\rm +\ }{{\rm y}}_k\left({\rm t}\right){\rm +\ }\sum_{{\rm i,j}}{{{\rm a}}_{{\rm k.ij}}{{\rm x}}_{{\rm i}}\left({\rm t}\right){{\rm y}}_{{\rm j}}\left({\rm t}\right)}{\rm +\ }\epsilon_k\left({\mathbf x},{\mathbf y}\right)\right)\ \\
   &={\rm -}{\rm DlnD\ }\left({\rm (}{\alpha }_k+{\beta }_k{\rm )t+}\sum_{{\rm i,j}}{{{\rm a}}_{{\rm k.ij}}{\alpha }_i{\beta }_j{{\rm t}}^{{\rm 2}}}{\rm +}\epsilon_k\left({\mathbf x},{\mathbf y}\right)\right)\label{eq:Taylor2OrderExpand}
\end{align}
Here \textit{$\alpha $${}_{k}$, $\beta $${}_{k}$} are the k-th elements of the infinitesimal tangent vector $\frac{d}{dt}{\mathbf x}(t)$ and $\frac{d}{dt}{\mathbf y}(t)$, and ${{\rm a}}_{{\rm k.ij}}$ is the structure constant for second order terms in the Taylor expansion of \textbf{x}(t) and \textbf{y}(t); and $\epsilon_k\left({\mathbf x},{\mathbf y}\right)$ is $o(t^3)$\footnote{\cite[pp.~14-15]{BelinfanteKolmanSmith1966}.}.
After applying Theorem \autoref{thm:InfiniVecGroupProd}; multiplying and dividing  terms inside the brackets in \eqref{eq:Taylor2OrderExpand} by $({\alpha }_k+{\beta }_k)$, and differentiating, the differential of constant terms vanish since
\begin{equation}\label{}
  {\rm D}\ln\left({\alpha }_k+{\beta }_k\right)=0.
\end{equation}
So we can rewrite \eqref{eq:Taylor2OrderExpand} as
\begin{align}
   {\left(A_{ra}\ {\mathbf u}\right)}_k=(\pmb{\alpha}\pmb{\beta})_k &=-Dln\left[1+\left(\frac{2}{{\alpha }_k+{\beta }_k}\right)\sum_{ij}{a_{k.ij}{\alpha }_i{\beta }_jt}+\frac{\epsilon_k\left({\mathbf x},{\mathbf y}\right)}{{\alpha }_k+{\beta }_k}\right]\label{eq:APra}\\
   &\approx\ \left(\frac{{\rm -}{\rm 2}}{{\alpha }_k+{\beta }_k}\right)\sum_{{\rm ij}}{{{\rm a}}_{{\rm k.ij}}{\alpha }_i{\beta }_j}+ o(t)\\
   &=-\sum_{ij}{{\widehat{{\rm a}}}_{k.ij}}{\alpha }_i{\beta }_j+o(t)\label{eq:APra}\\
   \intertext{For risk seeking (rs), the sign of the Arrow-Pratt operator changes according to the spin vector in Definition \autoref{defn:SpinVector}. So we leave $\alpha_i\beta_j$ the same for convenience but define}
   \theta_{ji}\alpha_i\beta_j = \alpha^2_j+\beta^2_i\;&\text{and}\;a_{k.ji}=(2+\theta_{ji})_k\;\text{such that}\\
  {\left(A_{rs}{\mathbf u}\right)}_{{\rm k}} = (\pmb{\beta}\pmb{\alpha})_k &= \sum_{{\rm ij}}{{\widehat{{\rm a}}}_{{\rm k.ji}}{\alpha }_i{\beta }_j} + o(t)\label{eq:APrs}
\end{align}
 Subtract \eqref{eq:APrs} from \eqref{eq:APra} to get the $k$-th element of the Lie product vector in Definition \autoref{defn:LieProduct}
\begin{align}
   {\left(A_{ra}\ {\mathbf u}\right)}_k -{\left({{\rm A}}_{{\rm rs}}{\mathbf u}\right)}_{{\rm k}}  &= (\pmb{\alpha}\pmb{\beta})_k-(\pmb{\beta}\pmb{\alpha})_k\label{eq:LieAlgebra}\\
    &=-\sum_{{\rm ij}}{{{\widehat{\rm a}}}_{{\rm k.ij}}{\alpha }_i{\beta }_j}{\rm +\ o}\left({\rm t}\right)-\sum_{{\rm ij}}{{\widehat{{\rm a}}}_{{\rm k.ij}}{\alpha }_i{\beta }_j}{\rm +}\ o(t)\\
   \Rightarrow {(\left(A_{ra}-A_{rs}\right){\mathbf u})}_k &=-\sum_{i,j}\left(\hat{a}_{k.ij}+\hat{a}_{k.ji}\right)\alpha_i\beta_j + o(t)\\
  \Rightarrow {(\left(A_{ra}-A_{rs}\right){\mathbf u})}_k &\to \sum_{i,j}c_{k.ij}\alpha_i\beta_j\label{eq:NetRisk}
\end{align}
where the quantity
\begin{equation}
   c_{k.ij}=-\left(\hat{a}_{k.ij}+\hat{a}_{k.ji}\right)\label{eq:StructureConst}
\end{equation}
is the \emph{structure constant} for the risk operations on our topological group $G$. This gives rise to the following
\begin{defn}[Commutator]\label{defn:Commutator}~\\
   Let $\mathbf{x},\ \mathbf{y}\in G$. The \emph{commutator} of $\mathbf{x}$ and $\mathbf{y}$ is defined by $\mathbf{x}^{-1}\mathbf{y}^{-1}\mathbf{x}\mathbf{y}$. The commutator is the element that induces commutation between $\mathbf{x}$ and $\mathbf{y}$ so that $$\mathbf{x}\mathbf{y}=\mathbf{y}\mathbf{x}(\mathbf{x}^{-1}\mathbf{y}^{-1}\mathbf{x}\mathbf{y}) \qquad \Box $$
\end{defn}
\begin{defn}[Structure constant or coupling constant]
   The structure constant $c_{k.ij}$ characterizes the strength of the interaction between risk averse and risk seeking behavior. \hfill $\Box$
\end{defn}
\begin{thm}[Infinitesimal vector of commutator curve]\cite[pg.~106]{Guggenheim1977}\label{thm:InfiniVecCommutateCurve}~\\
   $[\pmb{\alpha},\pmb{\beta}]$ is the infinitesimal vector of the  commutator curve $(\mathbf{x}^{-1}\mathbf{y}^{-1}\mathbf{x}\mathbf{y})(t^2)$. \hfill $\Box$
\end{thm}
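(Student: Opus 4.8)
The plan is to obtain the assertion by a direct second-order Taylor expansion of the four-fold product along the curves, in the same spirit as the bilinear term of the group product was extracted in \eqref{eq:TaylorExpandUtil}--\eqref{eq:StructureConst}. First I would record the shape of the group product near the unit. By axiom (L2) the map $\mathbf{f}$ is $C^2$, so it admits a second-order expansion about $(\mathbf{e},\mathbf{e})=(\pmb{0},\pmb{0})$; axiom (L4), namely $\mathbf{f}(\mathbf{e},\mathbf{y})=\mathbf{y}$ and $\mathbf{f}(\mathbf{x},\mathbf{e})=\mathbf{x}$, forces the linear part to be exactly $\mathbf{x}+\mathbf{y}$ and annihilates every pure quadratic term in $\mathbf{x}$ alone or $\mathbf{y}$ alone. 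Hence $\mathbf{f}(\mathbf{u},\mathbf{v})=\mathbf{u}+\mathbf{v}+\Gamma(\mathbf{u},\mathbf{v})+o(|\mathbf{u}|^2+|\mathbf{v}|^2)$, where $\Gamma$ is the bilinear form whose $k$-th component is $\sum_{i,j}a_{k.ij}u_iv_j$ in the notation of \eqref{eq:StructKonst2}. Writing $\mathbf{f}(\mathbf{x},\mathbf{x}^{-1})=\pmb{0}$ and matching orders then yields the inverse expansion $\mathbf{u}^{-1}=-\mathbf{u}+\Gamma(\mathbf{u},\mathbf{u})+o(|\mathbf{u}|^2)$.

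Next I would substitute the curve germs $\mathbf{x}(t)=\pmb{\alpha}t+\tfrac12\mathbf{a}t^2+o(t^2)$ and $\mathbf{y}(t)=\pmb{\beta}t+\tfrac12\mathbf{b}t^2+o(t^2)$ into the two half-products $\mathbf{x}^{-1}\mathbf{y}^{-1}$ and $\mathbf{x}\mathbf{y}$, retaining terms through $t^2$, and then compose the two results once more through $\mathbf{f}$. Associativity (L3) guarantees that the bracketing $(\mathbf{x}^{-1}\mathbf{y}^{-1})(\mathbf{x}\mathbf{y})$ is immaterial, so the grouping may be chosen for convenience. The first-order parts of the two halves are $-(\pmb{\alpha}+\pmb{\beta})t$ and $(\pmb{\alpha}+\pmb{\beta})t$, so the linear term of the commutator curve cancels; this is the analytic reflection of the fact that $\mathbf{x}^{-1}\mathbf{y}^{-1}\mathbf{x}\mathbf{y}$ departs from the unit with zero ordinary velocity. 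Collecting the $t^2$-coefficient, the contributions $\tfrac12(\mathbf{a}+\mathbf{b})$ arising from the second-order curve data enter the two halves with opposite signs and cancel, as do the diagonal terms $\Gamma(\pmb{\alpha},\pmb{\alpha})$ and $\Gamma(\pmb{\beta},\pmb{\beta})$; what survives is precisely $\Gamma(\pmb{\alpha},\pmb{\beta})-\Gamma(\pmb{\beta},\pmb{\alpha})$, whose $k$-th component is $\sum_{i,j}(a_{k.ij}-a_{k.ji})\alpha_i\beta_j$. By Definition \autoref{defn:LieProduct} this is exactly $[\pmb{\alpha},\pmb{\beta}]$.

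Finally I would record the conclusion in the parametrization demanded by the statement: the commutator curve satisfies $(\mathbf{x}^{-1}\mathbf{y}^{-1}\mathbf{x}\mathbf{y})(t)=[\pmb{\alpha},\pmb{\beta}]\,t^2+o(t^2)$, so under the reparametrization $s=t^2$ its tangent at $s=0$ is $[\pmb{\alpha},\pmb{\beta}]$; this is the sense in which $[\pmb{\alpha},\pmb{\beta}]$ is the infinitesimal vector of $(\mathbf{x}^{-1}\mathbf{y}^{-1}\mathbf{x}\mathbf{y})(t^2)$, and it agrees with the value already obtained from the difference curve $\mathbf{x}\mathbf{y}-\mathbf{y}\mathbf{x}$ via Theorem \autoref{thm:InfiniVecGroupProd}. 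The step I expect to be the main obstacle is the order-$t^2$ bookkeeping: one must carry the inverse correction $\Gamma(\mathbf{u},\mathbf{u})$ and the cross term $\Gamma(\mathbf{x}^{-1},\mathbf{y}^{-1})=\Gamma(\pmb{\alpha},\pmb{\beta})t^2+o(t^2)$ simultaneously and verify that every non-bracket contribution cancels, together with a careful justification that ``infinitesimal vector'' is well defined for a curve whose ordinary first derivative vanishes at the origin, which is what legitimizes the $t^2$ reparametrization.
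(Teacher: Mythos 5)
Your proposal is correct, but there is no internal proof to compare it against: the paper imports Theorem \autoref{thm:InfiniVecCommutateCurve} purely by citation to \cite[pg.~106]{Guggenheim1977}, exactly as it does Theorem \autoref{thm:InfiniVecGroupProd} and Ado's theorem, and then uses it as a black box to interpret the structure constants in \eqref{eq:StructureConst}. What you have written is the standard Taylor-expansion proof of the cited result, and your bookkeeping does come out as claimed: with $\mathbf{f}(\mathbf{u},\mathbf{v})=\mathbf{u}+\mathbf{v}+\Gamma(\mathbf{u},\mathbf{v})+o(|\mathbf{u}|^2+|\mathbf{v}|^2)$ and $\mathbf{u}^{-1}=-\mathbf{u}+\Gamma(\mathbf{u},\mathbf{u})+o(|\mathbf{u}|^2)$, the half $\mathbf{x}^{-1}\mathbf{y}^{-1}$ carries $-\tfrac{1}{2}(\mathbf{a}+\mathbf{b})+\Gamma(\pmb{\alpha},\pmb{\alpha})+\Gamma(\pmb{\beta},\pmb{\beta})+\Gamma(\pmb{\alpha},\pmb{\beta})$ at order $t^2$, the half $\mathbf{x}\mathbf{y}$ carries $\tfrac{1}{2}(\mathbf{a}+\mathbf{b})+\Gamma(\pmb{\alpha},\pmb{\beta})$, the outer composition contributes $-\Gamma(\pmb{\alpha}+\pmb{\beta},\pmb{\alpha}+\pmb{\beta})$, and the sum collapses to $\Gamma(\pmb{\alpha},\pmb{\beta})-\Gamma(\pmb{\beta},\pmb{\alpha})$. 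Three points should be made explicit in a write-up. First, the diagonal terms $\Gamma(\pmb{\alpha},\pmb{\alpha})$ and $\Gamma(\pmb{\beta},\pmb{\beta})$ do not cancel ``within the two halves'' as your wording suggests; they appear only in $\mathbf{x}^{-1}\mathbf{y}^{-1}$ and are killed only by the outer cross term $\Gamma(\mathbf{x}^{-1}\mathbf{y}^{-1},\mathbf{x}\mathbf{y})$, so that term must be carried alongside the inner one $\Gamma(\mathbf{x}^{-1},\mathbf{y}^{-1})$. Second, axioms (L1)--(L4) of Definition \autoref{defn:LieGroup} postulate no inverses, so before expanding $\mathbf{u}^{-1}$ you need one sentence invoking the implicit function theorem to solve $\mathbf{f}(\mathbf{u},\mathbf{w})=\mathbf{e}$ locally, which is legitimate because $\mathbf{f}\in C^2$ by (L2) and $D_2\mathbf{f}(\mathbf{e},\mathbf{e})=I$ by (L4). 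Third, since Definition \autoref{defn:LieProduct} defines $[\pmb{\alpha},\pmb{\beta}]$ as the infinitesimal vector of the difference curve $(\mathbf{x}\mathbf{y}-\mathbf{y}\mathbf{x})(t^2)$, your final identification needs the one-line computation $(\mathbf{x}\mathbf{y}-\mathbf{y}\mathbf{x})(t)=\bigl(\Gamma(\pmb{\alpha},\pmb{\beta})-\Gamma(\pmb{\beta},\pmb{\alpha})\bigr)t^2+o(t^2)$, which you only gesture at; with it, the linear and $\tfrac{1}{2}(\mathbf{a}+\mathbf{b})$ terms visibly cancel and both curves have the same coefficient of $t^2$, completing the argument. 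With those repairs your proof supplies what the paper leaves to the literature, and it is consistent with how the theorem is subsequently used in the symmetry manipulations of $\widehat{a}_{k.ij}$ leading to \eqref{eq:NetRisk}.
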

\noindent The quantities
\begin{equation}
  \widehat{a}_{k.ij}=\left(\frac{2}{\alpha_k+\beta_k}\right) a_{ k.ij}\label{eq:StructureConst2}
\end{equation}
has the following interpretation. ${\alpha }_k,\ {\beta }_k$ are the k-th element of the tangent vector $\dot{{\mathbf x}}{\mathbf (}t)$ and $\dot{\mathbf{y}}(t)$ and 2$a_{k.ij}$ is the k-th  coefficient of the second order terms which reflect the rate of spin of the tangent vectors. That is, in the context of Definition \autoref{defn:Torsion} \; $\widehat{a}_{k.ij}$ is a \emph{torsion type} constant. However, examination of \eqref{eq:APra}, \eqref{eq:APrs}  and Definition \autoref{defn:SpinVector} suggests that, in the context of our model, $\widehat{a}_{k.ij}$ reflects the rate at which agents ``flip'' between risk aversion and risk seeking in decision making. It is, in effect, \emph{risk torsion}\footnote{\cite[pg.~127]{Pratt1964} distinguished his risk measure from the curvature in Definition \autoref{defn:Curvature}. By the same token, ``risk torsion" is distinguished from the torsion in Definition \autoref{defn:Torsion}.}.
\begin{lem}[Coupling risk aversion and risk seeking torsion]\label{lem:StructureConstant}~\\
    The structure constant  $c_{k.ij}=-\left(\widehat{a}_{k.ij}+\widehat{a}_{k.ji}\right)$ associated with risk operations reflects the coupling between risk aversion and risk seeking torsion behavior in decision making. \hfill $\Box$
\end{lem}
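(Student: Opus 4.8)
The plan is to assemble the lemma directly from the second-order Taylor expansion of the vector-valued utility $\mathbf{u}(\mathbf{x},\mathbf{y})$ carried out in \eqref{eq:TaylorExpandUtil}--\eqref{eq:Taylor2OrderExpand}, together with the action of the Arrow-Pratt operators $A_{ra}$ and $A_{rs}$ from Definition \autoref{defn:ArrowPrattOperator}. First I would fix the curves $\mathbf{x}(t)=\pmb{\alpha}t+\dotsc$ and $\mathbf{y}(t)=\pmb{\beta}t+\dotsc$ with infinitesimal vectors $\pmb{\alpha},\pmb{\beta}$, and invoke Theorem \autoref{thm:InfiniVecGroupProd} so that the group-product curve $\mathbf{x}\mathbf{y}$ carries infinitesimal vector $\pmb{\alpha}+\pmb{\beta}$; this justifies writing the $k$-th component of $\mathbf{u}$ as $(\alpha_k+\beta_k)t+\sum_{i,j}a_{k.ij}\alpha_i\beta_j t^2+\epsilon_k(\mathbf{x},\mathbf{y})$ with $\epsilon_k=o(t^3)$ and $a_{k.ij}=(2+\theta_{ij})_k$ by \eqref{eq:StructKonst1}--\eqref{eq:StructKonst2}.

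The core computation applies $A_{ra}=-D\ln D$ to this component. The key algebraic manipulation is to factor $(\alpha_k+\beta_k)t$ out of the argument of the logarithm, so that the constant drift contributes $D\ln(\alpha_k+\beta_k)=0$ and only the quadratic curvature term and the remainder survive inside $\ln\bigl(1+(\cdots)\bigr)$; linearizing via $\ln(1+z)\approx z$ then yields \eqref{eq:APra}, namely $(A_{ra}\mathbf{u})_k=-\sum_{i,j}\widehat{a}_{k.ij}\alpha_i\beta_j+o(t)$ with $\widehat{a}_{k.ij}=\tfrac{2}{\alpha_k+\beta_k}\,a_{k.ij}$ as in \eqref{eq:StructureConst2}. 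Repeating the computation with the sign reversal dictated by the spin vector (Definition \autoref{defn:SpinVector}) and the transposed index convention gives the risk-seeking expression \eqref{eq:APrs}, $(A_{rs}\mathbf{u})_k=\sum_{i,j}\widehat{a}_{k.ji}\alpha_i\beta_j+o(t)$.

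I would then form the difference $(A_{ra}-A_{rs})\mathbf{u}$, which by Definition \autoref{defn:LieProduct} is precisely the $k$-th component of the Lie product $(\pmb{\alpha}\pmb{\beta})_k-(\pmb{\beta}\pmb{\alpha})_k$, i.e. the infinitesimal vector of the commutator curve by Theorem \autoref{thm:InfiniVecCommutateCurve}. Passing to the germ limit $t\to 0$ collapses the $o(t)$ terms and leaves $\sum_{i,j}c_{k.ij}\alpha_i\beta_j$ with $c_{k.ij}=-(\widehat{a}_{k.ij}+\widehat{a}_{k.ji})$, exactly as in \eqref{eq:StructureConst} and \eqref{eq:NetRisk}. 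Since each $\widehat{a}_{k.ij}$ was identified as a risk-torsion constant (the rate at which agents flip between risk aversion and risk seeking, in the sense of the torsion of Definition \autoref{defn:Torsion}), the coefficient $c_{k.ij}$ is the additive coupling of the two torsions, which is the assertion of the lemma.

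The main obstacle I anticipate is not the bookkeeping but the analytic justification of the linearization step: one must verify that $\ln|u'|$ is well defined along the germ (so that $u'\neq 0$ near $\mathbf{e}$, consistent with Definition \autoref{defn:LogarithmicOperator}), that the factored remainder $\epsilon_k/(\alpha_k+\beta_k)$ remains genuinely $o(t)$ after differentiation, and that the order of operations between $D$ and $\ln$ commutes with the truncation, so that the surviving quadratic term is the honest structure constant rather than an artefact of the expansion. Controlling these remainders uniformly on the Lie group germ $V=\inf_\alpha\{U^M_\alpha\cap U^{TK}_\alpha\}$, and confirming the bilinearity of $[\cdot,\cdot]$ against the Lie-algebra axioms of Definition \autoref{defn:LieGroup}, is where the real care is needed.
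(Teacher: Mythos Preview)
Your proposal is correct and follows essentially the same route as the paper: the lemma is not given a standalone proof there but is stated as an immediate summary of the derivation in \eqref{eq:TaylorExpandUtil}--\eqref{eq:StructureConst}, together with the interpretation of $\widehat{a}_{k.ij}$ as risk torsion via Definition \autoref{defn:SpinVector}. You have reproduced that computation faithfully, and your caveats about the linearization step and the $o(t)$ remainders are more careful than anything the paper itself spells out.
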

\subsection{Prudence risk torsion}\label{subsec:PrudenceTortion}
\noindent Lemma \autoref{lem:StructureConstant} is related to the concept of \emph{prudence}, introduced by \cite{Sandmo1970} in the context of a two period model of consumption and investment, characterized by a utility function $U(C_1,C_2)$ where $C_1,\ C_2$ are consumption in periods $1$ and $2$. There, Sandmo is interested in comparing a subject's response to income and capital risk in a two period model with interest rate is $r$.
\begin{defn}[Prudence]\cite[pg.~353]{Sandmo1970}\label{defn:Prudence}
   A subject is prudent if in the face of income risk [s]he engages in \emph{precautionay savings} as a buffer against future consumption. \hfill $\Box$
\end{defn}
\cite[pg.~359]{Sandmo1970} condition for prudence rests on the relationship:
\begin{align}
   \frac{\partial}{\partial C_2}\left\{\frac{\frac{\partial^2U}{\partial C_1\partial C_2}-(1+r)\frac{\partial^2 U}{\partial C^2_2}}{\frac{\partial U}{\partial C_2}}\right\} &< 0 \label{eq:PrudenceCondition}
\end{align}
\noindent This implies the existence of $U^{\prime\prime\prime}$. In fact, \cite[pg.~354,~eq.~2]{Sandmo1970} suggests and \cite[pg.~60,~eq.~9]{Kimball1990} states that for a utility function $U\in C^3(X)$ \emph{prudence} is defined by the operation
\begin{align}
   A_pU &= -\frac{U^{\prime\prime\prime}}{U^{\prime\prime}}\\
   \intertext{which, in the context of Definition \autoref{defn:ArrowPrattOperator}, is a risk operation}
   A_{pa}U &= A_{ra}U^{\prime\prime}\label{eq:APpa}
\end{align}
\noindent  \cite[pg.~354]{Sandmo1970} described a subject's risk attitudes towards present [known] $(C_1)$ and future [uncertain] $(C_2)$ consumption thusly:
\begin{quote}
   Diagramatically it means that, starting at any point in the indifference map [for $U(C_1,C_2)$], the risk aversion function decreases with movements in the NW direction [$C_2\uparrow$] and increases with movement in the SE direction [$C_1\downarrow$]. We shall refer to this assumption as the hypothesis of \emph{decreasing temporal risk aversion}.
\end{quote}
[Emphasis added]. In the context of Lemma \autoref{lem:StructureConstant}, that description implies a coupling between the directions of the two risk operations. To see this, for some measure $\mu$ on $X$ consider the integral operator
\begin{align}
  \mathfrak{I}(\mu)= (\mathfrak{I} U)(x) &= \int_X U(x)\mu(dx),\qquad\text{so that}\\
   U = (\mathfrak{I}\circ \mathfrak{I}) U^{\prime\prime} &\Rightarrow (A_{pa}\circ \mathfrak{I}\circ \mathfrak{I}) U^{\prime\prime} = A_{ra}U^{\prime\prime}\\
   \Rightarrow A_{ra} &= (A_{pa}\circ \mathfrak{I}\circ \mathfrak{I})
\end{align}
\noindent by virtue of \eqref{eq:APpa}. We note that $\mathfrak{I}$ could be any one of several functional integration operators  characterized by a $\mu$-measure in the literature on decision making under risk and uncertainty. For example, $\mathfrak{I}$ includes but is not limited to \cite{vonNeumanMorgenstern1953}(VNM utility functional); \cite{GilboaSchmeilder1989}(maximin expected utility (MEU)); \cite{KlibanoffMarinacciMukerji2005} (smooth ambiguity); \cite{MacheroniMarinnaciRusticini2006} (variational model of that captures ambiguity); \cite{ChateauFaro2009} (operator representation of confidence preferences) or \cite{Machina1982}(local utility functional). Let $\ominus$ be the coupling action for risk averse and risk seeking prudence operations. Thus we can rewrite \eqref{eq:NetRisk} as
\begin{align}
  {(\left(A_{ra}-A_{rs}\right){\mathbf u})}_k &={\left(\left[\left(A_{pa}\ominus A_{ps}\right)\circ\mathfrak{I}\circ\mathfrak{I}\right]{\mathbf u}\right)}_k \to \sum_{i,j}c_{k.ij}\alpha_i\beta_j
\end{align}
\noindent We summarize the foregoing with the following
\begin{lem}[Prudence risk torsion]\label{lem:PrudenceRiskTorsion}
   Let $D$ be a differential operator, $A_{ra}=-D\ln D$ be Arrow-Pratt risk aversion operator, and $A_{rs}=-A_{ra}$ be the corresponding risk seeking operator. Furthermore, let $\mathfrak{I}$ be an integral operator. Define the prudence operation for risk aversion by $(A_{pa}U)=(A_{ra}\circ D\circ D)U$, assuming that the expressed functions are in the domains of the respective operators. Then the prudence risk torsion operator is given by
   $$\left(A_{ra}-A_{rs}\right)= \left[\left(A_{pa}\ominus A_{ps}\right)\circ\mathfrak{I}\circ\mathfrak{I}\right]  $$ \hfill $\Box$
\end{lem}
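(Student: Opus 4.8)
The plan is to read the asserted operator identity as the statement that post-composition with the double integral $\mathfrak{I}\circ\mathfrak{I}$ undoes the double differentiation $D\circ D$ built into the prudence operator, so that the prudence representation of risk collapses back onto the Arrow--Pratt representation already used in \eqref{eq:NetRisk}. First I would fix the domain: by Definition \autoref{defn:ArrowPrattOperator} the functions of interest lie in $C^2_0(X)\cap\mathfrak{D}(D)$, while prudence requires $U\in C^3(X)$, and I would interpret $\mathfrak{I}$ as the indefinite integration operator associated with $\mu$, so that $\mathfrak{I}=D^{-1}$ on the relevant class and the fundamental theorem of calculus gives $(\mathfrak{I}\circ\mathfrak{I})U^{\prime\prime}=U$ modulo the affine kernel of $D\circ D$. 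The vanishing of that kernel is exactly the cancellation already exploited in the Arrow--Pratt computation, where $D\ln(\alpha_k+\beta_k)=0$ discards the constant terms; I would invoke the same mechanism here to kill the boundary and affine contributions.

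Second, I would establish the single-branch identity $A_{ra}=A_{pa}\circ\mathfrak{I}\circ\mathfrak{I}$. Starting from the definition $(A_{pa}U)=(A_{ra}\circ D\circ D)U$ together with relation \eqref{eq:APpa}, I substitute $U=(\mathfrak{I}\circ\mathfrak{I})U^{\prime\prime}$ to obtain
$$
A_{ra}U^{\prime\prime}=A_{pa}U=A_{pa}\bigl((\mathfrak{I}\circ\mathfrak{I})U^{\prime\prime}\bigr)=(A_{pa}\circ\mathfrak{I}\circ\mathfrak{I})U^{\prime\prime}.
$$
Since $U^{\prime\prime}$ ranges over the full image $D^2(C^3(X))$, this is an equality of operators on that image, yielding $A_{ra}=A_{pa}\circ\mathfrak{I}\circ\mathfrak{I}$.

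Third, I would repeat the argument for the risk-seeking branch. Using $A_{rs}=-A_{ra}$ from Definition \autoref{defn:ArrowPrattOperator} and the analogous prudence operator $A_{ps}$, the same substitution gives $A_{rs}=A_{ps}\circ\mathfrak{I}\circ\mathfrak{I}$. The coupling action $\ominus$ is then defined precisely so that $A_{pa}\ominus A_{ps}$ reproduces the interaction already identified as the structure constant $c_{k.ij}$ in Lemma \autoref{lem:StructureConstant}; with that definition the difference $A_{ra}-A_{rs}$ inherits the same coupling at the germ level. Substituting the two single-branch identities into \eqref{eq:NetRisk} and factoring the common right factor $\mathfrak{I}\circ\mathfrak{I}$ out of both terms produces
$$
(A_{ra}-A_{rs})=\bigl[(A_{pa}\ominus A_{ps})\circ\mathfrak{I}\circ\mathfrak{I}\bigr],
$$
which is the claim.

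The main obstacle is the factoring step, because $A_{ra}=-D\ln D$ is \emph{nonlinear} in $U$, so neither the difference $A_{ra}-A_{rs}$ nor the pull-out of $\mathfrak{I}\circ\mathfrak{I}$ is a routine linear manipulation. I expect the crux to be showing that post-composition with $\mathfrak{I}\circ\mathfrak{I}$ distributes across the coupling $\ominus$ --- equivalently, that the nonlinear interaction between the two branches is insensitive to whether it is evaluated on $U$ or on the recovered $(\mathfrak{I}\circ\mathfrak{I})U^{\prime\prime}$. This is legitimate only because $\ominus$ is anchored at the infinitesimal-vector level of \eqref{eq:NetRisk}, where the product $\alpha_i\beta_j$ and the structure constant $c_{k.ij}$ are already fixed by the Lie group germ data and are therefore independent of the integral representation; I would make that independence explicit to close the gap.
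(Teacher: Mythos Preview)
Your proposal is correct and follows essentially the same route as the paper. The paper's argument, given in the paragraphs immediately preceding the lemma, is precisely: use $U=(\mathfrak{I}\circ\mathfrak{I})U^{\prime\prime}$ together with \eqref{eq:APpa} to obtain $A_{ra}=A_{pa}\circ\mathfrak{I}\circ\mathfrak{I}$, introduce $\ominus$ as the coupling action by fiat, and then rewrite \eqref{eq:NetRisk} accordingly; your additional care about domains, the parallel risk-seeking branch, and the nonlinearity of $-D\ln D$ goes somewhat beyond what the paper makes explicit, but the skeleton is the same.
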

\subsection{Risk operator representation}\label{subsec:RiskOperRepresentation}
\noindent Perhaps most important, the risk averse operation $A_{ra}$ in \eqref{eq:APra} and risk seeking operation $A_{rs}$ in \eqref{eq:APrs} have different signs at a given point $\mathbf{x}_0\in G$.  In that case, the \emph{risk torsion operator} $A=A_{ra}-A_{rs}$  in \eqref{eq:LieAlgebra} has positive and negative eigenvalues and it belongs to the \textit{quantum group} $SU(n)$. This is a characteristic of Gauss curvature K(\textbf{x${}_{0}$}) associated with a hyperbolic point \textbf{x}${}_{0}$ on the utility hypersurface near the reference point or identity (\textbf{e}) in $G$. See \cite[pg.~213]{Guggenheim1977} and \cite[pp.~77-79]{Struik1961}. A three dimensional sketch of a hyperbolic point is depicted in \autoref{fig:TKyperbolicPt}. See \cite[pg.~83]{Struik1961}. There it can be seen that the curvature of surface area in a neighbourood of the saddle depends on the cross section or ``spin".
\begin{figure}
   \centering
%%----start  of  first  figure----
      \begin{minipage}[h]{0.4\linewidth}
         %\centering
         %\includegraphics[width=1in]{graphic}
         \captionof{figure}{Hyperbolic point on hypersurface}
         \label{fig:TKyperbolicPt}
         \centerline{\includegraphics[scale=.8]{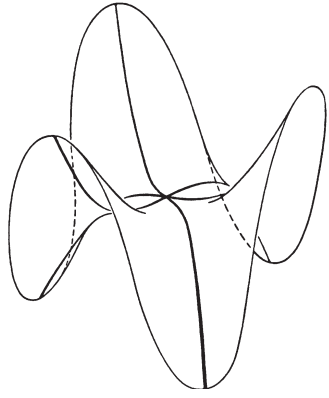}}
         %\caption{Small  Box}  \label{fig:side:a}
      \end{minipage}
   \hspace{1cm}
%%----start  of  second  figure----
%   \begin{minipage}[h]{0.4\linewidth}
%      \centering
      % \includegraphics[width=1.5in]{graphic}
%      \captionof{figure}{Phase portrait of behavioural orbit}
%      \label{fig:TKPWFPhhaseFunc}
%      \centerline{\includegraphics[scale=.6]{TKPWFPhhaseFunc}}
      % \caption{Big  Box}    \label{fig:side:b}
%   \end{minipage}
\end{figure}
\noindent
\begin{thm}[Lie algebra of risk operators]\label{thm:LieAlgebraRiskOper}~\\
   Let $G$ be a compact group on a differentiable manifold \textit{X} of choice vectors in $R$${}^{n}$, and $\mathbf{u}: G \times G \to G$ be a mapping of a compact group onto itself.  Let \textbf{u} be a $C^2_0\left(X\right)$ vector valued von Neuman-Morgenstern utility function defined on \textit{X}, and $\mathbf{x}(t), \mathbf{y}(t)$ be choice vectors in $G\subset X$. Define risk operators $A_{\{\cdot\}}$ such that for risk aversion A${}_{ra}$ = $-DlnD$ (risk seeking A${}_{rs}$ = $DlnD$) on the class of functions $u\in C^2_0\left(X\right)\cap \mathfrak{D}(A)$ where $\mathfrak{D}(A)$ is the domain of $A$. Then the Lie algebra $\mathcal{L}(G)$ for the risk associated to \textit{u} is the special linear group $SL_n$ of skew symmetric matrices. \hfill $\Box$
\end{thm}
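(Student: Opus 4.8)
The plan is to assemble the conclusion from the infinitesimal machinery already in place: the second order expansions \eqref{eq:APra} and \eqref{eq:APrs}, the product and commutator Theorems \autoref{thm:InfiniVecGroupProd} and \autoref{thm:InfiniVecCommutateCurve}, and the adjoint relation $K^*=-K^T$ of Definition \autoref{defn:BehavioralOper}. First I would fix the reference point as the identity $\mathbf{e}\in G$ and work inside the Lie group germ $V=\inf_\alpha\{U^M_\alpha\cap U^{TK}_\alpha\}$, so that each choice curve $\mathbf{x}(t),\mathbf{y}(t)$ is pinned down to second order by its infinitesimal vector $\pmb{\alpha},\pmb{\beta}$ through \eqref{eq:VecExpansion}. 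By Theorem \autoref{thm:InfiniVecGroupProd} the product curve $\mathbf{x}\mathbf{y}$ carries infinitesimal vector $\pmb{\alpha}+\pmb{\beta}$, and applying $A_{ra}=-D\ln D$ to $\mathbf{u}(\mathbf{x},\mathbf{y})$, after the cancellation $D\ln(\alpha_k+\beta_k)=0$, returns the bilinear form $(A_{ra}\mathbf{u})_k=(\pmb{\alpha}\pmb{\beta})_k=-\sum_{ij}\widehat a_{k.ij}\alpha_i\beta_j$. The spin reversal of Definition \autoref{defn:SpinVector} produces the risk seeking form $(A_{rs}\mathbf{u})_k=(\pmb{\beta}\pmb{\alpha})_k$, so that the risk torsion operator $A=A_{ra}-A_{rs}$ realizes precisely the Lie product $[\pmb{\alpha},\pmb{\beta}]_k=\sum_{ij}c_{k.ij}\alpha_i\beta_j$ of Definition \autoref{defn:LieProduct}, with $c_{k.ij}$ the structure constant \eqref{eq:StructureConst}.

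The second step is to certify that this bracket makes $\mathcal{L}(G)$ a Lie algebra in the sense of Definition \autoref{defn:LieAlgebra}. Bilinearity is immediate from the dependence on $\alpha_i\beta_j$ and the two displayed identities in Definition \autoref{defn:LieGroup}. Antisymmetry $[\pmb{\alpha},\pmb{\beta}]=-[\pmb{\beta},\pmb{\alpha}]$ follows from the commutator construction $\mathbf{x}\mathbf{y}-\mathbf{y}\mathbf{x}$ and, at the operator level, from $K^*=-K^T$ together with the skew symmetry $T^*=-T$ recorded in \autoref{subsubsec:ErgodicBehave}. The Jacobi identity I would extract from the associativity axiom (L3) of the germ, expanded to third order as in the Guggenheimer derivation underlying Theorem \autoref{thm:InfiniVecCommutateCurve}; this simultaneously identifies $[\pmb{\alpha},\pmb{\beta}]$ with the infinitesimal vector of the commutator curve $(\mathbf{x}^{-1}\mathbf{y}^{-1}\mathbf{x}\mathbf{y})(t^2)$ and shows that the bracket closes inside the tangent space at $\mathbf{e}$.

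The third step passes from the abstract bracket to a faithful matrix representation and reads off membership in $SL_n$. For fixed $\pmb{\alpha}$ the map $\pmb{\beta}\mapsto[\pmb{\alpha},\pmb{\beta}]$ is linear, so $\mathrm{ad}_{\pmb{\alpha}}$ is an $n\times n$ matrix assembled from the structure constants $c_{k.ij}$ of \eqref{eq:StructureConst} and \eqref{eq:StructureConst2}. Since $A=A_{ra}-A_{rs}$ reverses sign under the spin and obeys $A^*=-A^T=-A$, every such matrix is skew symmetric; skew symmetry annihilates the diagonal, whence $\mathrm{tr}(\mathrm{ad}_{\pmb{\alpha}})=0$ and each generator lies in the traceless algebra $\mathfrak{sl}_n=\mathcal{L}(SL_n)$. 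The indefinite signature forced by the hyperbolic Gauss curvature $K(\mathbf{x}_0)$ at the reference point—the coexistence of positive and negative eigenvalues of $A$—then confirms that the represented generators are exactly the skew symmetric elements of $\mathfrak{sl}_n$, which is the asserted conclusion.

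The main obstacle lies in the second and third steps: establishing that the bracket built from $c_{k.ij}$ is genuinely antisymmetric and closes under the Jacobi identity, since the symmetric appearance of $c_{k.ij}=-(\widehat a_{k.ij}+\widehat a_{k.ji})$ in the pair $(i,j)$ must be reconciled with the antisymmetry required of a Lie product. I expect the reconciliation to rest on the operator level skew symmetry $K^*=-K^T$ rather than on the index symmetry of the coefficient array, and on a careful third order expansion of (L3) to certify Jacobi; controlling the $o(t)$ remainders uniformly on the germ $V$ is the delicate quantitative point.
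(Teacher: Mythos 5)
Your reconstruction of the bracket follows the paper's own (implicit) argument exactly: the paper never gives this theorem a separate proof, and its support is the derivation in \eqref{eq:APra}--\eqref{eq:StructureConst} preceding the statement, which is precisely your first step; your charitable reading of the garbled conclusion (skew symmetric generators, hence traceless, hence inside $\mathfrak{sl}_n$) also matches what the paper itself asserts afterwards, namely that the matrix representation lies in $\mathcal{O}_n\subset SL_n\subset GL_n$ with Lie algebra $\mathcal{L}(\mathcal{O}_n)$. Where you genuinely diverge is the passage to matrices. The paper never constructs the adjoint map $\mathrm{ad}_{\pmb{\alpha}}$; instead, under the VNM assumption (Assumption \autoref{assum:VNMutility}) it expands a one-parameter matrix curve $X(t)=I_n+At+\dotsc$ in $GL_n$, differentiates $X(t)X(t)^T$ at $t=0$ to isolate $A^T+A$, restricts to the orthogonal subgroup so that $A^T+A=0$, identifies $A_{rs}=-A_{ra}$ via the adjoint relation of Definition \autoref{defn:BehavioralOper} (equation \eqref{eq:SkewSymmRisk}), and then invokes Ado's theorem (Theorem \autoref{thm:Ado}) to secure a finite-dimensional matrix representation. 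Your route through $\mathrm{ad}_{\pmb{\alpha}}$ and trace-zero is more self-contained and buys you an explicit representation rather than an existence statement from Ado; the paper's route buys explicitness about where the skew symmetry comes from (the VNM reflection structure), which your argument uses only implicitly through $K^*=-K^T$. Your step 2 (bilinearity, antisymmetry, Jacobi via a third-order expansion of (L3)) appears nowhere in the paper, which simply cites Guggenheimer's definitions.

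The obstacle you flag at the end is real, and you should know the paper never resolves it either: since $c_{k.ij}=-\left(\widehat{a}_{k.ij}+\widehat{a}_{k.ji}\right)$ is symmetric under the exchange $i\leftrightarrow j$, the bilinear form $\sum_{i,j}c_{k.ij}\alpha_i\beta_j$ is symmetric in $(\pmb{\alpha},\pmb{\beta})$ and so cannot be the bracket of a nontrivial Lie algebra; operator-level skew symmetry of $K$ or of $A=A_{ra}-A_{rs}$ cannot repair an index-level symmetry of the coefficient array. The paper sidesteps this by deriving, just after the theorem, that consistency with \eqref{eq:SkewSymmRisk} forces $\widehat{a}_{k.ij}=\widehat{a}_{k.ji}$, and then separately setting $c_{k.ij}=0$, i.e.\ retreating to the Abelian case in which the bracket vanishes identically --- which is why the subsequent lemmas speak of Abelian groups in Hardy spaces. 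So your proposal reproduces the paper's argument where it exists, is more careful where the paper is silent, and correctly isolates the one point at which both arguments break down; but neither your deferral to operator skew symmetry nor the paper's retreat to the Abelian case actually delivers a nonabelian Lie algebra of skew symmetric matrices as the theorem claims.
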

\begin{thm}[Risk torsion quantum group]\label{thm:RiskTorsionQuantumGroup}
    Let \textbf{u} be a $C^3_0\left(X\right)$ vector valued von Neuman-Morgenstern utility function defined on \textit{X}, and  $A=A_{ra}-A_{rs}$ be a risk torsion operator. Then $A$ has representation in the quantum group $SU(n)$. \hfill $\Box$
\end{thm}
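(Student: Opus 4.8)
The plan is to obtain a matrix representative of $A$ from the Lie-algebra structure already established and then promote it from the real to the unitary setting. First I would invoke \autoref{thm:LieAlgebraRiskOper}: since $\mathbf{u}\in C^3_0(X)\subset C^2_0(X)$, the Lie algebra $\mathcal{L}(G)$ of the risk operators associated to $\mathbf{u}$ is represented by skew-symmetric, trace-zero matrices, with bracket fixed by the structure constants $c_{k.ij}$ of \eqref{eq:StructureConst}. Through the commutator expansion \eqref{eq:NetRisk}, the risk torsion operator $A=A_{ra}-A_{rs}$ is realized as one such matrix $M=[m_{kj}]$ whose entries are assembled from $\sum_i c_{k.ij}\alpha_i$. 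The $C^3_0$ hypothesis is exactly what licenses retaining the third-order (torsion) terms of the Taylor expansion, so that $M$ encodes the flip rate $\widehat{a}_{k.ij}$ of \eqref{eq:StructureConst2}, i.e. the rate of change of the Arrow--Pratt measure in the sense of \autoref{defn:Torsion}, rather than merely the second-order curvature data used under the weaker $C^2_0$ hypothesis.

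The second step certifies membership in $\mathfrak{su}(n)$. A real skew-symmetric $M$ satisfies $M^{T}=-M$, hence $M^{\dagger}=\overline{M}^{T}=M^{T}=-M$, so $M$ is skew-Hermitian; its diagonal vanishes, so $\operatorname{tr}M=0$. Therefore $M\in\mathfrak{su}(n)$ (in fact within the subalgebra $\mathfrak{so}(n)\subset\mathfrak{su}(n)$ of real skew-symmetric matrices). Because $SU(n)$ is compact and connected, the exponential map $\exp\colon\mathfrak{su}(n)\to SU(n)$ is surjective, so the one-parameter family $\exp(tM)$ generated by the risk torsion operator lies in $SU(n)$; equivalently, $iM$ is a Hermitian generator of a unitary flow. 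This is the sense in which $A$ has a representation in the quantum group $SU(n)$.

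The remaining and most delicate step is to show that this representation genuinely realizes the indefinite, hyperbolic structure of $SU(n)$ rather than collapsing to a definite or trivial subgroup, and this is where the hyperbolic point of \autoref{fig:TKyperbolicPt} enters. I would argue that because $A_{rs}=-A_{ra}$ carries the opposite sign to $A_{ra}$ at the reference point $\mathbf{x}_0$ (cf. \autoref{defn:ArrowPrattOperator} and the spin-vector sign change of \autoref{defn:SpinVector}), the Hermitian matrix $iM$ has eigenvalues occurring in $\pm$ pairs, i.e. both strictly positive and strictly negative; equivalently the spectrum of $M$ is purely imaginary and conjugate, $\{\pm i\lambda_j\}$. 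This matches the indefinite signature of the Gauss curvature $K(\mathbf{x}_0)<0$ at a hyperbolic point on the utility hypersurface. The hard part will be precisely this spectral argument: one must use the $C^3_0$ torsion terms to show that the flip between risk aversion and risk seeking forces a nondegenerate indefinite spectrum---no collapse to the zero operator and no single-sign (definite) spectrum---thereby pinning the representation to $SU(n)$ and distinguishing \autoref{thm:RiskTorsionQuantumGroup} from the $C^2_0$ conclusion of \autoref{thm:LieAlgebraRiskOper}.
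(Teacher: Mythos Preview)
Your proposal tracks the paper's own argument, which is not a separate proof but the single paragraph immediately preceding the theorem: the paper observes that $A_{ra}$ and $A_{rs}$ carry opposite signs at $\mathbf{x}_0$, asserts that $A=A_{ra}-A_{rs}$ therefore has positive and negative eigenvalues, and declares that this places $A$ in the quantum group $SU(n)$, linking this to the hyperbolic Gauss curvature $K(\mathbf{x}_0)$ on the utility hypersurface. You reproduce all three moves but interpolate the steps the paper omits---the explicit embedding $\mathfrak{so}(n)\subset\mathfrak{su}(n)$ via $M^{\dagger}=-M$, $\operatorname{tr}M=0$, and the passage to the group by $\exp$---and you are more careful than the paper about what ``positive and negative eigenvalues'' can mean for a skew matrix (eigenvalues of $iM$ rather than of $M$). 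The spectral step you flag as the hard part is precisely where the paper is briefest: it simply asserts the indefinite spectrum from the sign mismatch without further argument, so your acknowledged gap is no larger than the paper's own.
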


\noindent
\begin{assumption}See \cite[Appendix]{vonNeumanMorgenstern1953}. \label{assum:VNMutility} ~\\
   Subjects have von Neuman Morgenstern utility.
\end{assumption}
Under von Newman Morgenstern (VNM) utility framework, Arrow-Pratt risk measure is positive for risk aversion, negative for risk seeking, and the absolute value of the measure is unchanged. So $A_{rs}=-A_{ra}$. This relation has the following consequence for
\begin{align}
   X(t)\in GL_n,\qquad X(t)&=\left(x_{ik}(t)\right)_{i,k=1,\dotsc,n}
\end{align}
\begin{align}
   \intertext{By definition of group operations in $GL_n$ the identity element is the $n\times n$ matrix $I_n$, and the ``tangent matrix" is characterized by some matrix $A$ analog to \eqref{eq:VecExpansion}. We write}
   X(t) &= I_n + At + \dotsc\\
   \intertext{So that $X(0)=I_n$ is consistent with the idea that the identity element of $GL_n$ must correspond to the origin $t=0$ in accord with Definition \autoref{defn:Group}. Now}
   X(t)X(t)^T &= I_n + A^Tt + At + AA^Tt^2 +\dotsc \label{eq:GLnExpansion1}\\
   \intertext{After differentiating the left hand side and setting $t=0$, the Lie group germ structure is}
   X^\prime(0)X(0)^T &+ X(0)X^{{\prime}^T}(0) = A^T + A \label{eq:GLnExpansion2}\\
   \intertext{Let}
   A_{rs} = A\\
   \intertext{According to Definition \autoref{defn:BehavioralOper} the adjoint behavioural operator is now}
   A^*_{rs} = A_{ra} &= -A^T \label{eq:RiskProfile}\\
   \intertext{If $X(t)\in\mathcal{O}_n\subset GL_n$, i.e. $X(t)$ belongs to the group of orthogonal matrices so $X(t)X(t)^T=I_n$, then \eqref{eq:GLnExpansion1}, \eqref{eq:GLnExpansion2} and \eqref{eq:RiskProfile} reduces to}
   A^T + A = -A^*_{rs} &+ A_{rs}= A_{ra} + A_{rs}=0
\end{align}
In which case, we have \emph{skew symmetric} or \emph{antisymmetric} risk operation
\begin{equation}
   A_{rs}=-A_{ra}\label{eq:SkewSymmRisk}
\end{equation}
Thus, the risk operation in \eqref{eq:RiskProfile} is functionally equivalent to the \emph{risk torsion} operations in \eqref{eq:APra} and \eqref{eq:APrs}. This suggests that our behavioural operator $K$ in \eqref{eq:BehavEconfIndex} is well defined. More on point, the matrix representation of the skew symmetric risk operators $A_{\cdot}$ belongs to the orthonormal group $\mathcal{O}_{n}\subset SL_{n}\subset GL_{n}$ with Lie algebra $\mathcal{L}(\mathcal{O}_{n})$.
Thus, the Lie algebra $\mathcal{L}$(G) of the Lie group G is the algebra of skew symmetric\footnote{This result jibes well with \cite[pg.~268]{KahnTver1979} experiment where they reported: ``[T]he  preference  between  negative prospects is the mirror image of  the preference between positive prospects. Thus, the reflection of  prospects around 0 reverses the preference order. We label this pattern the \emph{reflection effect}''. So the risk operator is well defined.} matrices.
The foregoing is a special case of the important
\begin{thm}[Ado's theorem] \cite[pg.~202]{Nathanson1979}\label{thm:Ado}~\\
   Every finite dimensional Lie algebra $\mathcal{L}$ of characteristic zero has a finite dimensional representation. \hfill $\Box$
\end{thm}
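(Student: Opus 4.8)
The statement to be established is the substantive form of Ado's theorem: that $\mathcal{L}$ admits a \emph{faithful} finite dimensional representation, i.e.\ an embedding $\mathcal{L}\hookrightarrow\mathfrak{gl}_n(k)$ into matrices over a field $k$ of characteristic zero. (The non-faithful reading is vacuous, since the trivial representation always exists.) The plan is to decompose $\mathcal{L}$ by its Levi decomposition $\mathcal{L}=\mathfrak{s}\ltimes\mathfrak{r}$, where $\mathfrak{r}$ is the radical, i.e.\ the maximal solvable ideal, and $\mathfrak{s}$ is a semisimple Levi subalgebra, then to construct faithful representations of each factor separately and glue them compatibly across the semidirect product.

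First I would dispose of the semisimple factor. Since $\mathfrak{s}$ is semisimple its center vanishes, so the adjoint representation $\mathrm{ad}:\mathfrak{s}\to\mathfrak{gl}(\mathfrak{s})$ is already faithful. This handles $\mathfrak{s}$ with no further work, and it uses the characteristic zero hypothesis only through the validity of the Levi decomposition itself (Levi's theorem, which fails in positive characteristic).

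The core of the argument is the radical. Let $\mathfrak{n}$ be the nilradical of $\mathcal{L}$, i.e.\ the maximal nilpotent ideal, and recall the standard fact $[\mathcal{L},\mathfrak{r}]\subseteq\mathfrak{n}$. The first key step is Birkhoff's nilpotent embedding lemma: for the nilpotent algebra $\mathfrak{n}$ one forms the universal enveloping algebra $U(\mathfrak{n})$ with augmentation ideal $\omega$; because $\mathfrak{n}$ is nilpotent, $U(\mathfrak{n})/\omega^{m}$ is finite dimensional for each $m$ and, by the Poincar\'e--Birkhoff--Witt theorem, the left regular action of $\mathfrak{n}$ on $U(\mathfrak{n})/\omega^{m}$ is faithful for $m$ large enough and realizes every element of $\mathfrak{n}$ as a nilpotent operator. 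The second key step, the Zassenhaus extension, is to enlarge this to a faithful representation $\phi$ of the full radical $\mathfrak{r}$ in which $\mathfrak{n}$ still acts nilpotently; this is done by showing that the derivations of $\mathfrak{n}$ induced by $\mathfrak{r}/\mathfrak{n}$ can be realized inside a suitable finite dimensional quotient of $U(\mathfrak{n})$, so that the quotient becomes an $\mathfrak{r}$-module extending the $\mathfrak{n}$-action.

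Finally I would glue. Using $[\mathcal{L},\mathfrak{r}]\subseteq\mathfrak{n}$, the $\mathfrak{s}$-action on $\mathfrak{r}$ lands in the nilradical, which lets the representation $\phi$ of $\mathfrak{r}$ be promoted to a representation of all of $\mathcal{L}$ by exhibiting the outer derivations coming from $\mathfrak{s}$ as inner derivations on the enveloping-algebra quotient; taking the direct sum with $\mathrm{ad}:\mathfrak{s}\to\mathfrak{gl}(\mathfrak{s})$ then yields a finite dimensional representation of $\mathcal{L}$ that is faithful on both $\mathfrak{s}$ and $\mathfrak{r}$, hence faithful on $\mathcal{L}$. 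The hard part throughout is the solvable radical: engineering a single faithful representation in which the nilradical acts by nilpotent operators \emph{and} which extends across the Levi splitting is exactly where the difficulty, and the essential use of characteristic zero, concentrates, whereas the semisimple factor is essentially free via the adjoint map.
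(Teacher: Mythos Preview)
The paper does not prove this statement at all: Theorem~\ref{thm:Ado} is simply quoted from \cite[pg.~202]{Nathanson1979} and closed with a $\Box$, with only a one-sentence remark afterward explaining what ``characteristic zero'' means and that the theorem guarantees a matrix representation. It is invoked as a classical black box, not established.

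Your sketch, by contrast, outlines the standard modern proof of the faithful version of Ado's theorem via the Levi decomposition $\mathcal{L}=\mathfrak{s}\ltimes\mathfrak{r}$, Birkhoff's embedding of the nilradical into a finite quotient of its universal enveloping algebra, the Zassenhaus extension to the full radical, and then gluing with the adjoint representation of $\mathfrak{s}$. This is correct in broad outline and is the approach one finds in, e.g., Jacobson or Bourbaki; it is far more than the paper offers. One small caution: the final gluing step---promoting the $\mathfrak{r}$-representation $\phi$ to an $\mathcal{L}$-representation by realizing the $\mathfrak{s}$-derivations as inner on the enveloping quotient---is where most presentations do real work (one typically needs Harish-Chandra's or Iwasawa's refinement, or a careful induction through the derived/central series), and your sentence there is more of a signpost than an argument. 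But since the paper supplies no proof whatsoever, there is nothing in it to compare against beyond noting that you have gone well past what was required.
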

\begin{rem}
   A field $F$ has characteristic $0$ if for any $a\in F$ and $n\in \mathbf{N}$ $\ na=0$ implies $a=0$. For example, if the ``additive identity" element of the field is $0$, it is the number of times we must add the identity to get $0$. See \cite[pg.~69]{Clark1971}. The theorem basically says, for example,  that a finite dimensional Lie algebra with characteristic $0$ has a representation in the matrix group $GL$.  \hfill $\Box$
\end{rem}
\noindent This gives rise to the following
\begin{thm}[Lie algebra of risk operation on Abelian group]~\\
    The Lie algebra $\mathcal{L}(G)$ induced by risk operations on VNM utility with support on the abelian group $G$ is that of the antisymmetric or skew symmetric matrices $\mathcal{L}(\mathcal{O}_n)$.\hfill $\Box$
\end{thm}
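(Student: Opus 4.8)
The plan is to obtain the theorem as the abelian specialization of Theorem \autoref{thm:LieAlgebraRiskOper}, anchored by Assumption \autoref{assum:VNMutility} and closed off with Ado's theorem (Theorem \autoref{thm:Ado}). First I would invoke VNM utility to pin down the antisymmetric risk relation $A_{rs}=-A_{ra}$ recorded in \eqref{eq:SkewSymmRisk}. This supplies the infinitesimal content: writing a one-parameter family of choice curves in matrix form $X(t)=I_n+At+\dotsc$ with $X(0)=I_n$ identified with the reference point $\mathbf{e}$, the tangent matrix $A$ represents $A_{rs}$ while its adjoint $A^*_{rs}=A_{ra}$ represents risk aversion, exactly as in \eqref{eq:GLnExpansion1}--\eqref{eq:RiskProfile}.

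Second, I would use the abelian hypothesis at the level of infinitesimal vectors. By Theorem \autoref{thm:InfiniVecGroupProd} the product curve $\mathbf{xy}$ has infinitesimal vector $\pmb{\alpha}+\pmb{\beta}$, so that the first-order additive structure on the tangent space is already commutative; the nontrivial information is carried by the second-order commutator curve, whose infinitesimal vector is the Lie bracket $[\pmb{\alpha},\pmb{\beta}]$ by Theorem \autoref{thm:InfiniVecCommutateCurve}. Differentiating $X(t)X(t)^T$ and evaluating at $t=0$, as in \eqref{eq:GLnExpansion2}, yields the Lie-group-germ relation $A^T+A=0$ on the orthogonal subgroup $\mathcal{O}_n\subset SL_n\subset GL_n$. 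Combined with $A_{rs}=-A_{ra}$, this forces the tangent matrix to be skew symmetric, i.e. $A\in\mathcal{L}(\mathcal{O}_n)$.

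Third, I would verify closure: the skew-symmetric matrices are closed under the commutator bracket $[A,B]=AB-BA$ (if $A^T=-A$ and $B^T=-B$ then $[A,B]^T=-[A,B]$), so they form a genuine Lie algebra, and the structure constants $c_{k.ij}=-\left(\widehat{a}_{k.ij}+\widehat{a}_{k.ji}\right)$ of \eqref{eq:StructureConst} are precisely those of $\mathcal{L}(\mathcal{O}_n)$. Finally, since the underlying field is $\mathbf{R}$, which has characteristic $0$, Ado's theorem (Theorem \autoref{thm:Ado}) guarantees that this finite-dimensional Lie algebra admits a faithful finite-dimensional matrix representation, confirming that the risk-induced algebra is realized concretely as $\mathcal{L}(\mathcal{O}_n)$.

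The hard part will be reconciling the abelian hypothesis on $G$ with the non-abelian target $\mathcal{L}(\mathcal{O}_n)$: the resolution I would argue is that commutativity of $G$ collapses only the first-order additive structure $\pmb{\alpha}+\pmb{\beta}$, whereas the risk operators $A_{ra},A_{rs}$ act on the second-order terms and recover the nontrivial skew-symmetric bracket. The accompanying technical hazard is the sign bookkeeping in \eqref{eq:RiskProfile}--\eqref{eq:SkewSymmRisk}, where one must check that the orthogonality relation $A+A^T=0$ and the VNM relation $A_{ra}+A_{rs}=0$ are made to coincide under the identifications $A_{rs}=A$ and $A_{ra}=A^*_{rs}$ rather than to conflict.
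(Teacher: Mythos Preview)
Your proposal is correct and follows essentially the same route as the paper: the theorem is established by the inline discussion surrounding \eqref{eq:GLnExpansion1}--\eqref{eq:SkewSymmRisk}, using the VNM assumption to force $A_{rs}=-A_{ra}$, passing to the matrix expansion $X(t)=I_n+At+\dotsc$, differentiating $X(t)X(t)^T$ on $\mathcal{O}_n$ to obtain $A+A^T=0$, and then invoking Ado's theorem. Your additional closure check under the commutator and your explicit discussion of the abelian/non-abelian tension go beyond what the paper actually writes down, but they do not alter the argument.
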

The infinitesimal vectors characterized by \eqref{eq:SkewSymmRisk} and our choice of $\theta_{ij}$ and $\theta_{ji}$ in \eqref{eq:StructKonst1} and \eqref{eq:StructKonst2} on page \pageref{eq:StructKonst2} require that
\begin{align}
   -\left(\widehat{a}_{k.ij}-\widehat{a}_{k.ji}\right)&=0\\
   \Rightarrow \widehat{a}_{k.ij}&=\widehat{a}_{k.ji}\Rightarrow \theta_{k.ij}=\theta_{k.ji}\\
   \Rightarrow\frac{\alpha^2_i+\beta^2_j}{\alpha_i\beta_j}&-\frac{\alpha^2_j+\beta^2_i}{\alpha_j\beta_i}=0\\
   \Rightarrow \alpha^2_j+\beta^2_i &= r^2\label{eq:InfiniVecRelation}\\
   \alpha^2_i&+\beta^2_j=r^2>0\label{eq:InfiniVecCircle}
\end{align}
So the vector elements are on a circle.

The story is different for the structure constant
\begin{align}
   c_{k.ij}=-\left(\widehat{a}_{k.ij}+\widehat{a}_{k.ji}\right)&=0\\
   \Rightarrow \widehat{a}_{k.ij}&=-\widehat{a}_{k.ji}\Rightarrow \theta_{k.ij}=-\theta_{k.ji}\\
   \Rightarrow\frac{\alpha^2_i+\beta^2_j}{\alpha_i\beta_j}&+\frac{\alpha^2_j+\beta^2_i}{\alpha_j\beta_i}=4\\
   \Rightarrow \alpha^2_j+\beta^2_i &= 4\alpha_i\beta_j-r^2\label{eq:InfiniVecRelation}\\
   \alpha^2_i&+\beta^2_j=r^2>0\label{eq:InfiniVecCircle}
\end{align}
There are two scenarios implied by \eqref{eq:InfiniVecRelation}
\begin{enumerate}
  \item[\textbf{Sc1}] If $4\alpha_i\beta_j-r^2>0$, then the vectors lie in an annulus.\label{item:VecCondit1}
  \item[\textbf{Sc2}] If $4\alpha_i\beta_j-r^2\leq 0$, then the vectors are complex valued.\label{item:VecCondit2}
\end{enumerate}
\noindent In \textbf{Sc2} above, $\alpha_i$ and or $\beta_j$ are complex valued\footnote{This subsumes the case when sgn\ $\alpha_i\beta_j\neq$sgn\ $\alpha_j\beta_i$} in the circle \eqref{eq:InfiniVecCircle}. Thus, the Hardy spaces $H^p,\;0<p<\infty$ in the unit disk $\mathbf{D}$ are admissible for our class of infinitesimal vectors where
\begin{equation}
    H^p=\left\{f\bigl|\;\underset{0<r<1}{\sup}\left(\frac{1}{2\pi}\int^{2\pi}_0|f(re^{i\theta})|d\theta\right)^{\frac{1}{p}}\right\},\;0<p<\infty
\end{equation}
Here, $f$ is a \emph{holomorphic} function, i.e. complex valued function of complex variable(s) that is complex differentiable  at every point in its domain. This class of functions also include analytic functions. So the Abelian groups implied by our structure constant are in a Hardy space. Thus, we state the following
\begin{lem}[Abelian groups supported by risk operations]\label{lem:AbelianGroupHardySpace}
   The Abelian [transformation] groups supported by risk operations are in the class of holomorphic functions in Hardy spaces $H^p$. \hfill $\Box$
\end{lem}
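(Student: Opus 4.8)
The plan is to trace the logical chain already laid out in the derivation of scenarios \textbf{Sc1} and \textbf{Sc2}, and to close it by identifying the relevant infinitesimal vectors with holomorphic functions on the unit disk $\mathbf{D}$. First I would make precise what ``Abelian transformation group'' means here: by Definition \autoref{defn:Commutator} and Theorem \autoref{thm:InfiniVecCommutateCurve}, the group $G$ is Abelian precisely when the commutator curve $(\mathbf{x}^{-1}\mathbf{y}^{-1}\mathbf{x}\mathbf{y})(t^2)$ degenerates, i.e. its infinitesimal vector $[\pmb{\alpha},\pmb{\beta}]$ vanishes. Equivalently, the structure constant must vanish, $c_{k.ij}=0$ for all $k,i,j$. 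This is the hypothesis I would carry forward.

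Next I would substitute $c_{k.ij}=0$ into its defining relation \eqref{eq:StructureConst}, obtaining $\widehat{a}_{k.ij}=-\widehat{a}_{k.ji}$ and hence $\theta_{k.ij}=-\theta_{k.ji}$, exactly the antisymmetry that produces the constraint $\alpha_j^2+\beta_i^2=4\alpha_i\beta_j-r^2$ together with the circle relation $\alpha_i^2+\beta_j^2=r^2>0$. The algebra here is routine and already recorded in the excerpt; the substantive point is the dichotomy. In scenario \textbf{Sc1}, $4\alpha_i\beta_j-r^2>0$, so $(\alpha_i,\beta_j)$ lives on a real circle and gives a compact, annular real picture with no new analytic content. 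The case of interest is \textbf{Sc2}: when $4\alpha_i\beta_j-r^2\le 0$, the equation $\alpha_j^2+\beta_i^2=4\alpha_i\beta_j-r^2$ admits no real solution with both squares nonnegative, so at least one of $\alpha_i,\beta_j$ must be complex-valued while still satisfying $\alpha_i^2+\beta_j^2=r^2$ on the circle.

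From here I would pass from the infinitesimal vectors to the group curves they generate. By axiom (L2) of Definition \autoref{defn:LieGroup} the group product is $C^2$, and the one-parameter curves $\mathbf{x}(t)=\pmb{\alpha}t+\dotsb$ depend analytically on their (now complex) infinitesimal data; restricting the parameter to the disk and normalizing so that $r<1$ places the curves inside $\mathbf{D}$. The claim that such a curve $f$ is \emph{holomorphic} is then the statement that, being complex-differentiable at every interior point of $\mathbf{D}$, it satisfies the Cauchy--Riemann equations. Finally, Hardy-space membership follows because each such curve is a \emph{bounded} holomorphic function on $\mathbf{D}$ --- boundedness coming from the fixed radius $r$ of the circle $\alpha_i^2+\beta_j^2=r^2$ --- so that $\sup_{0<\rho<1}\bigl(\tfrac{1}{2\pi}\int_0^{2\pi}|f(\rho e^{i\theta})|\,d\theta\bigr)^{1/p}<\infty$ for every $0<p<\infty$; that is, $f\in H^\infty\subseteq H^p$.

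I expect the main obstacle to be the middle passage: rigorously upgrading ``the infinitesimal generators are complex-valued on a circle'' to ``the group elements are holomorphic functions lying in $H^p$.'' The infinitesimal data alone constrain only the first-order jet of the curve, so one must invoke the analytic dependence supplied by the Lie-group-germ structure (and, if one wishes to be safe, strengthen (L2) to real-analyticity) to guarantee that the \emph{entire} curve, and not merely its leading term $\pmb{\alpha}t$, is holomorphic. The $H^p$ estimate is then comparatively soft, resting only on the uniform bound furnished by the fixed circle radius $r$.
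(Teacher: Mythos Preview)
Your proposal tracks the paper's argument essentially step for step: the paper offers no separate proof for this lemma, and the justification is precisely the passage you reconstruct---set $c_{k.ij}=0$, derive the antisymmetry $\widehat{a}_{k.ij}=-\widehat{a}_{k.ji}$, obtain the circle relation $\alpha_i^2+\beta_j^2=r^2$ together with $\alpha_j^2+\beta_i^2=4\alpha_i\beta_j-r^2$, split into \textbf{Sc1}/\textbf{Sc2}, and in \textbf{Sc2} conclude that the infinitesimal data are complex-valued, whence Hardy spaces $H^p$ on the unit disk are declared ``admissible.'' The lemma is then simply stated with a $\Box$.

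Where you go beyond the paper is in the middle passage you flag as the main obstacle: lifting complex-valued infinitesimal generators to genuine holomorphic group curves and then securing the $H^p$ bound. The paper does not attempt this lift at all; it moves directly from ``$\alpha_i$ and/or $\beta_j$ are complex valued in the circle \eqref{eq:InfiniVecCircle}'' to ``Thus, the Hardy spaces $H^p,\;0<p<\infty$ in the unit disk $\mathbf{D}$ are admissible for our class of infinitesimal vectors,'' and records the lemma. Your appeal to axiom (L2) and the bounded-radius argument for $H^\infty\subseteq H^p$ are your own additions; they are reasonable heuristics, but as you yourself note, (L2) gives only $C^2$ regularity, not analyticity, so the holomorphy claim would require strengthening the hypotheses. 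That gap is real, and the paper does not close it either---it simply asserts admissibility. So your reading is faithful, and your caveat about the middle passage is well placed rather than a defect relative to the source.
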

\begin{lem}[Harmonic utility functions]\label{lem:HarmonicUtility}
    The class of utility functions $U:X\to \mathds{C}$ that support risk operations on Abelian transformation groups in Hardy spaces is harmonic. Specifically, for infinitesimal vectors $\pmb{\alpha}$ and $\pmb{\beta}$ where $\alpha^2_i+\beta^2_j=r^2$ are vector elements, $0<r<1$, unit disk $\mathbf{D}$, and $U\in H^p(\mathbf{D})$ we have the \emph{Laplacian} $\nabla^2 U = 0$ with solution $U(re^{i\theta})=r^{|n|}e^{i\theta},\;n\in \mathds{Z}$. \hfill $\Box$
\end{lem}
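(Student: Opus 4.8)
The plan is to split the claim into two parts: first, that membership in the Hardy space $H^p(\mathbf{D})$ forces $U$ to be harmonic; and second, that the harmonic modes on the disk compatible with the infinitesimal-vector constraint $\alpha^2_i+\beta^2_j=r^2$, $0<r<1$, are exactly the functions $r^{|n|}e^{in\theta}$. For the first part I would start from Lemma \autoref{lem:AbelianGroupHardySpace}, which already places $U$ among the holomorphic functions on the unit disk $\mathbf{D}$. I would then invoke the Wirtinger calculus: writing $z=x+iy$ with $\partial_z=\tfrac{1}{2}(\partial_x-i\partial_y)$ and $\partial_{\bar z}=\tfrac{1}{2}(\partial_x+i\partial_y)$, one has the operator identity $\nabla^2=4\,\partial_z\partial_{\bar z}$. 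Holomorphy of $U$ is equivalent to the Cauchy--Riemann condition $\partial_{\bar z}U=0$ at every point of its domain, whence $\nabla^2 U=4\,\partial_z(\partial_{\bar z}U)=0$. Equivalently, writing $U=P+iQ$ with $P,Q$ real and $C^2$, the Cauchy--Riemann equations $P_x=Q_y$, $P_y=-Q_x$ give $\nabla^2 P=\nabla^2 Q=0$, so $\nabla^2 U=0$. This establishes harmonicity.

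For the second part I would solve the Laplace equation directly on $\mathbf{D}$. The constraint $\alpha^2_i+\beta^2_j=r^2$ with $0<r<1$ identifies the relevant domain as the unit disk, and in polar coordinates the Laplacian reads $\nabla^2 U=U_{rr}+r^{-1}U_r+r^{-2}U_{\theta\theta}$. Separating variables $U(r,\theta)=R(r)\Theta(\theta)$ and introducing a separation constant yields $\Theta''+n^2\Theta=0$ together with the Euler equation $r^2R''+rR'-n^2R=0$. Single-valuedness of $U$ around the circle forces $\Theta(\theta)=e^{in\theta}$ with $n\in\mathds{Z}$, and the indicial computation $m^2-n^2=0$ for the trial $R=r^m$ gives the two-parameter family $R(r)=c_1r^{|n|}+c_2r^{-|n|}$ (with the logarithmic branch when $n=0$).

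The decisive step, and the one I expect to be the main obstacle, is the rejection of the singular branch $r^{-|n|}$. Here I would use that the reference point (the origin) is an interior regular point of the MTK base topology and that $U\in H^p(\mathbf{D})$: the $H^p$ growth bound is incompatible with the unbounded behaviour of $r^{-|n|}$ and $\ln r$ as $r\to 0$, forcing $c_2=0$ (and killing the logarithmic term). This leaves precisely the elementary harmonic modes $U(re^{i\theta})=r^{|n|}e^{in\theta}$, which are the monomials $z^n$ for $n\geq 0$ and $\bar z^{|n|}$ for $n<0$, matching the solution asserted in the statement; a general harmonic $U$ in the class is then the superposition $\sum_{n}c_n\,r^{|n|}e^{in\theta}$. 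The only delicate point beyond this is making the passage from the abstract transformation group of Lemma \autoref{lem:AbelianGroupHardySpace} to a concrete holomorphic $U$ on $\mathbf{D}$ fully rigorous, since harmonicity of the group elements and of $U$ must be tied together through the $H^p$ representation.
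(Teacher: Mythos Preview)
The paper does not actually supply a proof of Lemma~\ref{lem:HarmonicUtility}. The lemma is stated with a terminal $\Box$ and the text immediately moves on to an application (``According to Lemma~\ref{lem:HarmonicUtility}, for $n=1$, \dots''). The only supporting material is the preceding discussion culminating in Lemma~\ref{lem:AbelianGroupHardySpace}, which places the relevant objects among holomorphic functions in $H^p$; the harmonicity conclusion and the explicit form of the modes are simply asserted.

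Your proposal therefore goes well beyond what the paper offers. The two-step structure you use---(i) holomorphic $\Rightarrow$ harmonic via $\nabla^2=4\,\partial_z\partial_{\bar z}$ and the Cauchy--Riemann equation $\partial_{\bar z}U=0$, and (ii) separation of variables on the disk with the $H^p$ bound eliminating the singular branch $r^{-|n|}$---is the standard textbook route and is sound. One small observation: the paper's displayed solution reads $r^{|n|}e^{i\theta}$, whereas the separation-of-variables computation you carry out yields $r^{|n|}e^{in\theta}$; your version is the correct one, and the paper's formula appears to contain a typographical slip. The caveat you flag at the end---that the link between the abstract transformation group and a concrete holomorphic $U$ rests on Lemma~\ref{lem:AbelianGroupHardySpace}, which is itself stated without proof---is apt and is a genuine soft spot inherited from the paper rather than a defect of your argument.
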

According to Lemma \autoref{lem:HarmonicUtility}, for $n=1$, we have that the imaginary part $\Im{U}=u(x,y)=\sqrt{x^2+y^2}\ \sin(g(\theta))$ is an admissible harmonic utility specification for slow varying function $g(\theta),\; \theta=\tan^{-1}(\tfrac{y}{x})$. The interested reader is referred to the monograph by \cite{FollandStein1982} for ramifications of this result which is beyond the scope of this paper.
\begin{assumption} See \cite{KahnTver1979} and \cite{TverKahn1992}.\label{assum:TKutility}~\\
   Subjects have Tversky and Kahneman utility.
\end{assumption}
The story is different for prospect theory which posits a loss aversion index $\lambda$ for risk seeking over losses. So the Arrow-Pratt risk operator is asymmetrically skewed. In that case, in concert with \eqref{eq:APra} and \eqref{eq:APrs} we posit:
\begin{align}
    \widehat{a}_{k.ji} &\to\lambda \widehat{a}_{k.ij}\\
    c_{k.ij}&=-(\widehat{a}_{k.ij}+\lambda\widehat{a}_{k.ji})=0\label{eq:TKantiCommute}\\
    \Rightarrow \theta_{ik}&+\lambda\theta_{ij}+ 2(1+\lambda)=0\\
    \Rightarrow \alpha^2_j + \beta^2_i &= \frac{2\alpha_1\beta_j(1+\lambda)-r^2}{\lambda},\;\;\lambda\neq 0\label{eq:GaugeFunction}
\end{align}
So the commutator in \eqref{eq:TKantiCommute} is inflated by $\lambda$. In this case $\lambda$ is a \emph{gauge transformation} in \eqref{eq:GaugeFunction} because it has no effect on the commutativity of the underlying vectors.
\begin{defn}[Loss aversion gauge]\cite[pg.~125]{KobberlingWakker2005}\label{defn:LossAversionGauge}
   Loss aversion is a psychological gauge transformation which governs the rate of exchange between gain and loss units. \hfill $\Box$
\end{defn}
This implies that our vectors lie in Hardy spaces when $\lambda < \tfrac{r^2}{2\alpha_i\beta_j}-1, \lambda\neq 0$ and in an annulus or torus otherwise. The foregoing gives rise to the following estimates for loss aversion, and \cite{TverKahn1992} value function.
\subsection{Estimates of loss aversion index and value function}\label{subsec:EstimatesLossAverValueFunc}
\begin{prop}[Estimate of loss aversion index]\label{prop:LossAversionIndexEstimate}
   Let $\chi$ be an indicator function, $U^{MTK}_\alpha$ be a reference point ndb in Definition \autoref{defn:MarkowitzTverskyKahnemanTopol}. Let $\mathbf{D}$ be a unit disk, and $\pmb{\alpha},\pmb{\beta}$ be infinitesimal vectors in the Hardy space $H^p(\mathbf{D})\cap U^{MTK}_\alpha$ such that $\alpha^2_i +\beta^2_j=r^2>0,\;0<r<1;\; 1\leq i,j\leq n$. Let $v(x)=\chi_{\{x>0\}}v_g(x)-\chi_{\{x<0\}}\lambda v_\ell(x)$ be a value function with components $v_\ell,\;v_g$ on loss $(\ell)$-gain $(g)$ domain with loss aversion index $\lambda$. Then the loss aversion index has estimate $$0<\lambda\leq \left(\frac{r^2}{\inf_{1\leq i,j\leq n}\left(\alpha_i\beta_j\right)}-1\right)$$ \hfill $\Box$
\end{prop}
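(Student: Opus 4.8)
The plan is to read the bound directly off the loss-aversion gauge relation \eqref{eq:GaugeFunction} together with the Hardy-space admissibility criterion of Lemma \autoref{lem:AbelianGroupHardySpace}. The index $\lambda$ to be estimated is exactly the multiplier attached to the loss branch of $v(x)=\chi_{\{x>0\}}v_g(x)-\chi_{\{x<0\}}\lambda v_\ell(x)$, and under the Tversky--Kahneman asymmetry the coupling constant is annihilated, $c_{k.ij}=-(\widehat{a}_{k.ij}+\lambda\widehat{a}_{k.ji})=0$, as in \eqref{eq:TKantiCommute}. Feeding the circle constraint $\alpha^2_i+\beta^2_j=r^2$ from \eqref{eq:InfiniVecCircle} into that vanishing condition produces the gauged identity \eqref{eq:GaugeFunction}, $\alpha^2_j+\beta^2_i=\bigl(2\alpha_i\beta_j(1+\lambda)-r^2\bigr)/\lambda$ for $\lambda\neq 0$. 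The entire argument then hinges on the sign of the right-hand side: since $\alpha^2_j+\beta^2_i$ is a genuine sum of squares, a strictly positive right-hand side lands the pair on a real annulus (scenario \textbf{Sc1}), while a non-positive right-hand side forces $\alpha_j$ and/or $\beta_i$ to be complex, i.e. holomorphic elements of $H^p(\mathbf{D})$ (scenario \textbf{Sc2}). Thus membership in $H^p(\mathbf{D})\cap U^{MTK}_\alpha$ is precisely the demand that the numerator be non-positive.

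First I would impose that admissibility inequality and solve for $\lambda$. Because $\lambda>0$ (loss aversion is a strictly positive gauge by Definition \autoref{defn:LossAversionGauge}) and the vectors sit in the open unit disk so that $0<\alpha_i\beta_j$, the division is sign-preserving and produces, for each admissible pair $(i,j)$, a pointwise ceiling of the form $\lambda\le r^2/(\alpha_i\beta_j)-1$. Next I would make the estimate uniform over the finite index range $1\le i,j\le n$. Here the infimum appears through an existence quantifier rather than a universal one: the largest $\lambda$ for which \emph{some} extremal index pair realizes the torsion inside the Hardy space is obtained at the pair minimizing $\alpha_i\beta_j$, so replacing $\alpha_i\beta_j$ by $\inf_{1\le i,j\le n}(\alpha_i\beta_j)$ yields the ceiling
\[
\lambda\le \frac{r^2}{\inf_{1\le i,j\le n}(\alpha_i\beta_j)}-1 .
\]
Combining this with $\lambda>0$ delivers the stated two-sided estimate.

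The last check concerns the endpoints and non-degeneracy. Positivity $\lambda>0$ is inherited from the economic meaning of the reflection effect and from Definition \autoref{defn:LossAversionGauge}; I would also observe that the interval is nonempty only when $\inf_{i,j}(\alpha_i\beta_j)<r^2$, which holds automatically in the open disk, since $\alpha_i\beta_j<\tfrac{1}{2}(\alpha_i^2+\beta_j^2)=\tfrac{1}{2}r^2<r^2$ whenever $\alpha_i\neq\beta_j$, and the reduction to a single extremal pair keeps the coordinates strictly inside $\mathbf{D}$.

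The main obstacle I anticipate is the bookkeeping of the quantifier and the numerical coefficient. One must argue carefully that the Hardy-space requirement is to be met by \emph{some} torsion-realizing component pair (forcing the infimum and the loosest ceiling) rather than by every pair (which would instead produce a supremum and the tightest ceiling), and one must track how the coefficient multiplying $\alpha_i\beta_j$ in \eqref{eq:GaugeFunction} is normalized so that it does not survive into the final denominator. Reconciling that coefficient with the heuristic threshold $\lambda<r^2/(2\alpha_i\beta_j)-1$ flagged just above Definition \autoref{defn:LossAversionGauge} is the delicate step, and is where I would concentrate the rigor of the proof.
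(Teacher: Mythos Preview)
Your approach is essentially the paper's own: the proposition is stated without a separate proof and is intended to be read off directly from the gauge relation \eqref{eq:GaugeFunction} and the Hardy-space dichotomy \textbf{Sc1}/\textbf{Sc2}, exactly as you outline. The paper's only ``proof'' is the sentence immediately preceding Definition \autoref{defn:LossAversionGauge}, which records the threshold $\lambda < r^2/(2\alpha_i\beta_j)-1$.

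The coefficient issue you flag at the end is real and is not resolved in the paper either: the text derives $\lambda < r^2/(2\alpha_i\beta_j)-1$ from \eqref{eq:GaugeFunction}, while the proposition drops the factor $2$ in the denominator. There is no intervening normalization that removes it, so this is an internal inconsistency of the paper rather than a gap in your argument; your derivation of $\lambda\le r^2/(2\alpha_i\beta_j)-1$ is the one that actually follows from \eqref{eq:GaugeFunction}. Likewise, your discussion of the quantifier (existential over $(i,j)$, hence $\inf$ in the denominator) matches the form stated in the proposition, though the paper never makes that choice explicit.
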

\begin{prop}[Dirichlet estimate of value function]\label{prop:ValueFunctionEstimate}
   Let $\mathfrak{D}\subset U^{MTK}_\alpha \cap H^p(\mathbf{D})$ be a domain, and $\phi$ be a bounded function for regular points on the boundary of that domain $\partial\mathfrak{D}$. Let $v\in C^2_0(\mathfrak{D})$ such that
   \begin{enumerate}
     \item $\nabla^2v =0$ in $\mathfrak{D}$
     \item $\lim_{x\to \hat{x}} v(x) = \phi(\hat{x})$ for all $\hat{x}\in\partial\mathfrak{D}$
   \end{enumerate}
   Then $$v(x) = E^x\left[\phi(B_{\tau_\mathfrak{D}})\right]$$
   where $E^x$ is the expectation operator for Brownian motion starting at $x\in\mathfrak{D}$,  $B$ is Brownian motion with respet to some probability space $(\Omega,\mathcal{F},P)$, and the first exit time from the domain is $\tau_\mathfrak{D}=\inf\{t>0|\;B_t\notin\mathfrak{D}\}$ \hfill $\Box$
\end{prop}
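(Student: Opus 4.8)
The plan is to realize $v$ probabilistically through the mean value property of harmonic functions, which Brownian motion encodes via the distribution of its first exit point (harmonic measure). Concretely, I would apply It\^o's formula to the process $t\mapsto v(B_t)$ and exploit hypothesis (1), $\nabla^2 v=0$, to annihilate the bounded-variation (drift) term, leaving a local martingale; optional stopping at $\tau_{\mathfrak{D}}$ together with the boundary behaviour in hypothesis (2) then produces the representation.

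First I would fix $x\in\mathfrak{D}$ and choose an exhaustion by relatively compact subdomains $\mathfrak{D}_k\uparrow\mathfrak{D}$ with $\overline{\mathfrak{D}_k}\subset\mathfrak{D}$, setting $\tau_k=\inf\{t>0:B_t\notin\mathfrak{D}_k\}$. Since $v\in C^2_0(\mathfrak{D})$, It\^o's formula gives
\begin{equation}
   v(B_{t\wedge\tau_k})=v(x)+\int_0^{t\wedge\tau_k}\nabla v(B_s)\cdot dB_s+\tfrac{1}{2}\int_0^{t\wedge\tau_k}\nabla^2 v(B_s)\,ds .
\end{equation}
On $\mathfrak{D}_k$ the gradient $\nabla v$ is bounded and the Laplacian term vanishes by hypothesis (1), so the stochastic integral is a genuine (square-integrable) martingale. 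Taking $E^x$-expectations and applying the optional stopping theorem yields $E^x[v(B_{t\wedge\tau_k})]=v(x)$ for all $t,k$.

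Next I would pass to the limit in two stages. Letting $t\to\infty$ with $k$ fixed, continuity and boundedness of $v$ on $\overline{\mathfrak{D}_k}$ give $E^x[v(B_{\tau_k})]=v(x)$. Letting $k\to\infty$, I use $\tau_k\uparrow\tau_{\mathfrak{D}}$ a.s.\ and $\tau_{\mathfrak{D}}<\infty$ $P^x$-a.s.; the latter holds because $\mathfrak{D}\subset\mathbf{D}$ is bounded, so Brownian motion exits in finite time. Then $B_{\tau_k}\to B_{\tau_{\mathfrak{D}}}\in\partial\mathfrak{D}$, and invoking regularity of the boundary points (each $\hat{x}\in\partial\mathfrak{D}$ satisfying $P^{\hat{x}}(\tau_{\mathfrak{D}}=0)=1$) together with hypothesis (2) gives $v(B_{\tau_k})\to\phi(B_{\tau_{\mathfrak{D}}})$. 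Dominated convergence, justified by boundedness of $\phi$ and of $v$ on $\mathfrak{D}$, finally delivers
\begin{equation}
   v(x)=\lim_{k\to\infty}E^x[v(B_{\tau_k})]=E^x\!\left[\phi(B_{\tau_{\mathfrak{D}}})\right].
\end{equation}

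The main obstacle I anticipate is the boundary passage rather than the interior martingale step. The regular-points hypothesis is precisely what guarantees that the exit position $B_{\tau_{\mathfrak{D}}}$ lands on $\partial\mathfrak{D}$ where the prescribed continuous limit in (2) genuinely controls $v(B_{\tau_k})$; without regularity the path could accumulate at an irregular boundary point at which the limit fails, breaking the identification with $\phi$. A secondary technical point is upgrading the local martingale to a true martingale uniformly in $k$, which the exhaustion handles via the a priori boundedness of $v$ and $\nabla v$ on each $\overline{\mathfrak{D}_k}$ and the dominated convergence argument above.
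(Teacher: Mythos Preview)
Your argument is the standard probabilistic solution of the Dirichlet problem via It\^o's formula, exhaustion, optional stopping, and dominated convergence, and it is correct. The paper itself does not supply an independent proof of this proposition: it simply writes ``See \cite[pp.~177--178]{Oksendal2003}'', and what you have written is essentially the argument found there, so your proposal and the paper's (cited) proof coincide.
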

\begin{proof}
   See \cite[pp.~177-178]{Oksendal2003}.
\end{proof}
The latter proposition essentially says that for Brownian motion starting at an interior point $x\in\mathfrak{D}$, our estimate of the value function is its average over the distribution of its  values at those points were it potentially first exits the boundary of $\mathfrak{D}$. The results presented here fall under rubric of \emph{potential theory}. One of the problems posed by \cite[pg.~121]{KobberlingWakker2005} is how to estimate the loss aversion index $\lambda$ at the kink at $0$ depicted in the MTK neighbourhood in \autoref{fig:TKValueFunc} on page \pageref{fig:TKValueFunc}. Proposition \autoref{prop:LossAversionIndexEstimate} provides estimates for loss aversion based on elements of a tangent vector in the Lie group germ inside a Hardy space on a disk with radius normalized to 1. This approach explains the range of loss aversion index reported in \cite[pg.~1662]{AbdelBleiPara2007}.

We claim that Proposition \autoref{prop:ValueFunctionEstimate} above also allows us to obtain an estimate for loss aversion by imposing suitable boundary value conditions while circumventing the problem of existence of differentials at $0$. For example, \cite{Ito1950} provides analytics for Brownian motion in a Lie group germ consistent wit Definition \autoref{defn:LieGroup}. Further, \cite{HarrisonShepp1981} provide analytics for skew Brownian motion $X(t,\omega)$ induced by asymmetric probabilities for initial steps of a random walk, and local time around the origin where
\begin{equation}
   X(t,\omega) = B(t,\omega) + \beta L^X_t(0,\omega)\label{eq:SkewBrownianMotion}
\end{equation}
$B(t,\omega)$ is Brownian motion, $L^X_t(0,\omega)$ is local time at the origin, $|\beta|\leq 1$, and $X$ is $\mathcal{F}^B_t$-adapted. Thus, it is possible to obtain separate estimates for the value function, on loss and gain domains, based on asymmetric probabilities and the path properties of skew Brownian motion dynamics near the origin in the domain $\mathfrak{D}\subset U^{MTK}_\alpha \cap H^p(\mathbf{D})$. See also, \cite[pg.~310]{HarrisonShepp1981}. In order not to overload the paper we do not address that proposed estimation scheme here and leave it for another day.
\section{Conclusion}\label{sec:Conclusion}
 The behavioural operators and risk torsion concepts introduced in this paper provide a foundation for behavioural chaos in dynamical systems, and a mechanism for providing estimates for loss aversion and value functions. Moreover, it suggests that loss aversion index and value function estimation can be extended to \emph{potential theory}. We believe that the research paradigms suggested here would yield fruitful results that further our understanding of the data generating process for decision making under risk and uncertainty.
\appendix
%\appendixpage
%\addappheadtotoc
%\section{Appendix of Proofs}\label{apx:Appendix}
\section*{Appendix of Proofs}
\addcontentsline{toc}{section}{Appendix}
%\begin{center}
%     {\bf APPENDIX}
%\end{center}
\section{Proof of Lemma \autoref{lem:ConfidenceGraph}}\label{apx:ProofLemConfGraph}
\begin{proof}
~\
   \begin{itemize}
      \item[(i).] That $T$ is a bounded operator follows from the facts that the fixed point $p^*$ induces singularity in $K$ and $K^*$. Also, by construction $T$ is a contraction mapping so it is bounded. The analytic proof of those facts suggests that we let $\{T_n\}^\infty_{n=1}$ be a sequence of operators induced by an appropriate corresponding sequence of $K_n$ and $K^*_n$, and $\sigma(T_n)$--the spectrum of $T_n$. Thus, we write $\|T_n\|=\prod^{N_n}_{j=1}\lambda_j,\;\lambda_j\in\sigma(T_n)$, where $N_n=\dim\sigma(T_n)$.  Singularity implies $\lim_{N_n\rightarrow\infty}\lambda_{N_n}=0$ and for $\lambda\in\sigma(T)$, we have $\lim_{n\rightarrow\infty} \|T_n-T\|\leq \lim_{n\rightarrow\infty} |\lambda_{n}-\lambda|\|f\|=0$. Thus, $T_n\rightarrow T$ is bounded.

      \item[(ii).] Let $f\in\mathcal{D}(K)$ and $C(x)=(Kf)(x)$. So $(Tf)(x)=(K^*Kf)(x)=(K^*f^*)(x)=C^*(x)$ for $f\in\mathcal{D}(T)$. For that operation to be meaningful we must have $f^*\in\mathcal{D}(K^*)$. But $T^*=-T^T=K^TK=-T\Rightarrow f^*\in\mathcal{D}(T)$.
          According to the Open Mapping Closed Graph Theorem, see \cite[pg.~73]{Yosida1980}, the boundedness of $T$ guarantees that the graph $(f,Tf)\in\mathcal{D}(T)\times\mathcal{D}(T^*)$ is closed.
   \end{itemize}
\end{proof}
\section{Proof of Proposition \autoref{prop:ErgodicConfidence}}\label{apx:ProofPropErgodicConfi}
\begin{proof}
   Let $f\in\mathcal{D}(\widehat{T})$. Then $(\widehat{T}f)(x)=(K^*Kf)(x)=(K^*f^*)(x)=C^*$ for $f^*\in\mathcal{D}(T^*)$. But $T^*=-T^T=-(-K^TK)=K^TK=-T\Rightarrow f^*\in\mathcal{D}(T)$. Since $f$ is arbitrary, then by our reduced space hypothesis, $\widehat{T}$ maps arbitrary points $f$ in its domain back into that domain. So that $\widehat{T}:\mathcal{D}(\widehat{T})\rightarrow \mathcal{D}(\widehat{T})$. Whereupon from our probability space on Banach space hypothesis, for some measureable set  $A\in\mathcal{D}(\widehat{T})\cap\mathfrak{T}$ we have the set function $\widehat{T}(A)=A\Rightarrow \widehat{T}^{-1}(A)=A$ and $Q(\widehat{T}^{-1}(A))=Q(A)$. In which case $\widehat{T}$ is measure preserving. Now by Lemma \autoref{lem:ConfidenceGraph}, $(\widehat{T}C^*)(x)=\widehat{T}(\widehat{T}f)(x)=(\widehat{T}^2f)(x)\Rightarrow (f,\widehat{T}^2f)$ is a closed graph on $\mathcal{D}(\widehat{T})\times\mathcal{D}(\widehat{T}^*)$. By the method of induction, $(f,\widehat{T}^jf),\;j=1,2,\dotsc$ is also a graph for each $j$. In which case the evolution of the graphs $(f,\widehat{T}^jf),\;j=1,2,\dotsc$ is a dynamical system, see \cite[pg.~2]{Devaney1989}, that traces the trajectory or orbit of $f$. Now we construct a sum of N graphs and take their ``time average" to get
   \begin{align}
      f^*_N(x) &= \frac{1}{N}\sum^N_{j=1}(\widehat{T}^jf)(x)\\
      \intertext{According to Birchoff-Khinchin Ergodic Theorem, \cite[pg.~127]{GikhmanSkorokhod1969}, since $Q$ is measure preserving on $\mathfrak{T}$, we have}
      \lim_{N\rightarrow\infty}f^*_N(x) &=\lim_{N\rightarrow}\frac{1}{N}\sum^N_{j=1}(\widehat{T}^jf)(x)=f^*(x)\;\;a.s.\;Q\\
      \intertext{Furthermore, $f^*$ is $\widehat{T}$-invariant and $Q$ integrable, i.e.}
      (\widehat{T}f^*)(x) &= f^*(x)\\
      E[f^*(x)] &= \int f^*(x)dQ(x)=\lim_{N\rightarrow}\frac{1}{N}\sum^N_{j=1}\int(\widehat{T}^jf)(x)dQ(x)\\
      &= \lim_{N\rightarrow}\frac{1}{N}\sum^N_{j=1}E[({\widehat{T}}^jf)(x)]\label{eq:TimeAverage}\\
      \intertext{Moreover,}
      E[f^*(x)] &= E[f(x)]\Rightarrow (\widehat{T}E[f^*(x)])=\widehat{T}E[f(x)]=E[(\widehat{T}f)(x)]=E[C(x)]\label{eq:SpaceAverage}
   \end{align}
So the ``time average" in \eqref{eq:TimeAverage} is equal to the ``space average" in \eqref{eq:SpaceAverage}. Whence $f\in\mathcal{D}(\widehat{T})$ induces an ergodic component of confidence $C(x)$.
\end{proof}
\newpage
%\section*{}
%\clearpage
\bibliographystyle{chicago}
%\nocite{*}
\singlespace
\addcontentsline{toc}{section}{References}
\bibliography{MathFoundationRisk}         % bibliographic database

\end{document}